\newcommand{\R}{\mathbb{R}}
\newcommand{\C}{\mathcal{C}}
\newcommand{\la}{\lambda}
\newcommand{\rh}{{\rm H}_h}
\newcommand{\ga}{\gamma}
\newif\ifdraft
\newcommand{\fot}{\frac{1}{2}}
\newcommand{\ta}{\theta}
\newcommand{\pa}{\partial}
\newcommand{\ep}{\epsilon}
\newtheorem{definition}{\bfseries Definition}
\newtheorem{proposition}{\bfseries Proposition}
\newtheorem{assumption}{\it Assumption}
\newtheorem{theorem}{\bfseries Theorem}
\newtheorem{lemma}{\bfseries Lemma}
\newtheorem{property}{\bfseries Property}
\newtheorem{remark}{\bfseries Remark}
\newtheorem{problem}{\bfseries Problem}
\newtheorem{task}{\bfseries Task}
\renewcommand\normalsize{%
\@setfontsize\normalsize\@xpt\@xiipt
\abovedisplayskip 1.9\p@ \@plus2\p@ \@minus3\p@
\abovedisplayshortskip \z@ \@plus3\p@
\belowdisplayshortskip 1.9\p@ \@plus3\p@ \@minus3\p@
\belowdisplayskip \abovedisplayskip
\let\@listi\@listI}
\title{\LARGE \bf
Safe Control of Euler-Lagrange Systems with Limited Model Information
}
\author{}
\author{Yujie Wang and Xiangru Xu\thanks{Y. Wang and X. Xu are with the Department of Mechanical Engineering, University of Wisconsin-Madison,
        Madison, WI 53706, USA. Email: 
\{yujie.wang, xiangru.xu\}@wisc.edu.}}
\begin{document}

\maketitle



\begin{abstract}
This paper presents a new safe control framework for Euler-Lagrange (EL) systems with limited model information, external disturbances, and measurement uncertainties. The EL system is decomposed into two subsystems called the proxy subsystem and the virtual tracking subsystem. An adaptive safe controller based on barrier Lyapunov functions is designed for the virtual tracking subsystem to ensure the boundedness of the safe velocity tracking error, and a safe controller based on control barrier functions  is designed for the proxy subsystem to ensure controlled invariance of the safe set defined either in the joint space or task space. Theorems that guarantee the safety of the proposed controllers are provided. In contrast to existing safe control strategies for EL systems, the proposed method requires much less model information and can ensure safety rather than input-to-state safety.  Simulation results are provided to illustrate the effectiveness of the proposed method.
\end{abstract}






\section{Introduction}
\label{sec:introduction}
Safe-by-design control has received increasing interest because of its broad applications. Control Barrier Functions (CBFs) and Barrier Lyapunov Function (BLFs) are two widely investigated barrier type functions that can provably ensure \emph{safety} expressed as the controlled invariance of a given set \cite{ames2016control,Xu2015ADHS,jankovic2018robust,nguyen2021robust,tee2009barrier,ren2010adaptive,panagou2015distributed,jin2014barrier,salehi2020safe,wang2021observer,wang2022disturbance}. 
By integrating the CBF constraint into a convex quadratic program (QP), a CBF-QP-based controller is capable of serving as a safety filter that minimally alters possibly unsafe control inputs. In contrast, BLFs are Lyapunov-like functions defined in given open sets, such that they can ensure safety and stability simultaneously.

Euler-Lagrange (EL) systems, which represent a large number of mechanical systems including robot manipulators and vehicles, have been extensively investigated in the literature \cite{slotine1987adaptive,ortega2013passivity,cortez2022safe,capelli2022passivity}. Recently, the safe control of EL systems attracted significant attention because of the broad application of robotic systems in safety-critical scenarios, such as human-robot interaction. Many CBF-based control strategies have been developed for EL systems \cite{barbosa2020provably,ferraguti2022safety,farras2021safe}. Although these methods are demonstrated by both theoretical analysis and simulation/experimental results, they rely on model information of the EL system (i.e., the exact forms of the inertia matrix, the Coriolis/centripetal matrix, and the gravity term), which is hard to obtain precisely in practice. Few research has been devoted to the safe control of EL systems with limited model information \cite{singletary2021safety,molnar2021model}. In \cite{singletary2021safety}, a novel CBF that integrates kinetic energy with the classical form is proposed, resulting in reduced model dependence and less conservatism; however, this method does not take account of external disturbances, which are ubiquitous in practical applications. In \cite{molnar2021model}, a safe velocity is designed based on reduced-order kinematics and tracked by a velocity tracking controller; nevertheless, only \emph{input-to-state safety} \cite[Definition 3]{kolathaya2018input} rather than safety is ensured when the model information is unavailable, and the safe velocity is required to be differentiable. On the other hand, various BLF-based controllers have been developed for EL systems \cite{jin2014barrier,salehi2020safe}, whereas these approaches require the desired trajectory to stay inside the safe set and impose relatively strict structural requirements on safety constraints.

\begin{figure}[!t]
\centering
  \includegraphics[width=0.45\textwidth]{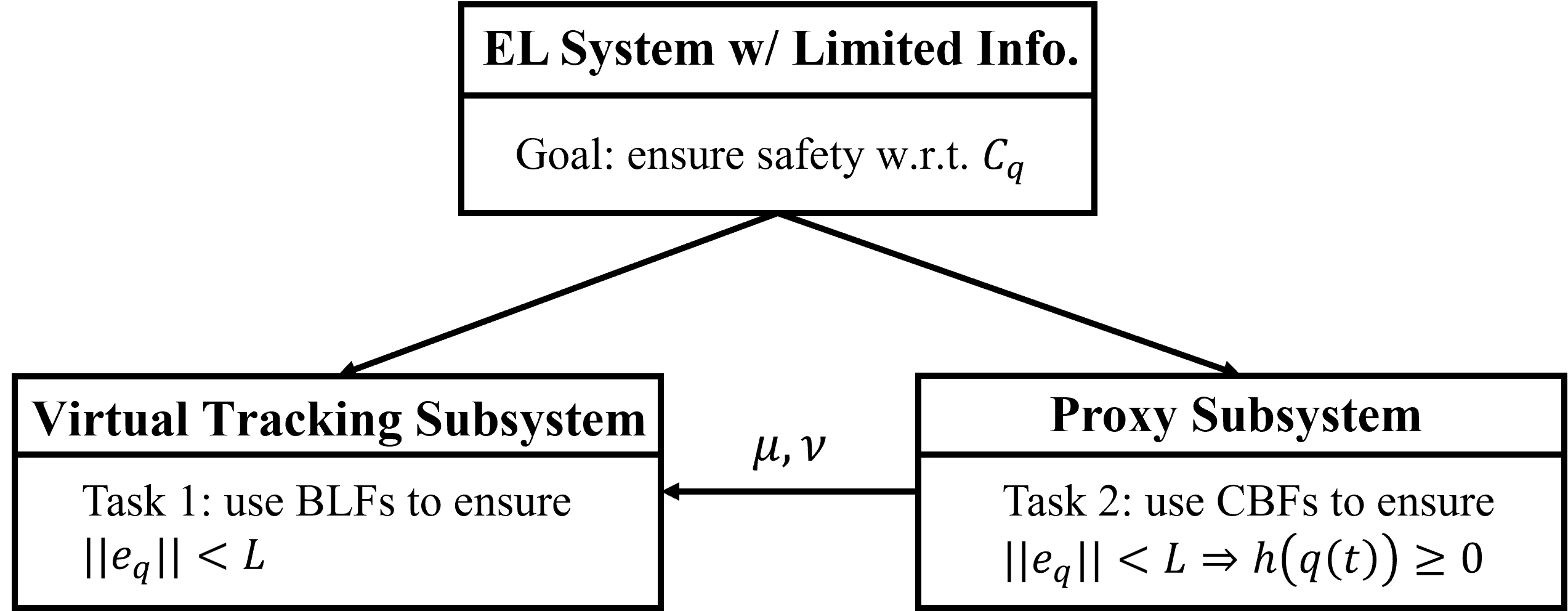}
\caption{Illustration of the proposed proxy-CBF-BLF control design scheme for safe control design of EL systems in the joint space. The original EL system is decomposed into the proxy subsystem and the virtual tracking subsystem. The safe velocity for the virtual tracking subsystem is generated by the proxy subsystem. A CBF-QP-based controller is designed for the proxy subsystem to ensure safety, while an adaptive BLF-based control law is proposed for the virtual tracking subsystem to constrain the safe velocity tracking error.}
\label{fig:taskillustraion}
\end{figure}

In this work, we propose a new control strategy for EL systems with limited model information, external disturbances, and measurement uncertainties. The original EL system is decomposed into two subsystems: the \textit{proxy subsystem}, which is a double integrator with a mismatched bounded disturbance, and the \textit{virtual tracking subsystem}, which corresponds to the dynamical model of the EL system. 
A CBF-based controller is designed for the proxy subsystem to generate the safe velocity, while an adaptive BLF-based controller is developed for the virtual tracking subsystem to track the safe velocity and ensure the boundedness of the tracking error.  See Fig. \ref{fig:taskillustraion} for illustration, where the symbols will be introduced in Section \ref{sec:joint}. 
Compared with existing results, the proposed method has four main advantages as shown in the following:
\begin{enumerate}
    \item The proposed method does not rely on any model information except for the upper bound of the inertia matrix's norm, which implies that even the bounds of the Coriolis-centrifugal and gravity matrices are not required in control design because such bounds are estimated by adaptive laws online.
    \item The closed-loop system is guaranteed to be safe, instead of input-to-state safe, in the presence of external disturbances and measurement uncertainties.
    \item The safe velocity's differentiability, which is important for velocity tracking control design, is guaranteed, and calculating its derivative is straightforward.
    \item The proposed method takes measurement uncertainties into account, allowing its use in robots where precise angular velocity measurements are not available.
\end{enumerate}

The remainder of this paper is organized as follows. In Section \ref{sec:preliminary},
preliminaries and the problem statement are introduced;
in Section \ref{sec:joint}, the joint space safe control strategy is presented; in Section \ref{sec:task}, the task space safe control scheme is shown;  in Section \ref{sec:simulation}, numerical simulation results are presented to validate the proposed method; and finally, the conclusion is drawn in Section \ref{sec:conclusion}.

\section{Preliminaries and Problem Statement}
\label{sec:preliminary}
Throughout the paper, we denote by $\R_{>0}$ and $\R_{\geq 0}$ the sets of positive real and nonnegative numbers, respectively. We denote $\|\cdot\|$ the 2-norm for vectors and the induced 2-norm  for matrices. We denote by $\sigma_{\rm min}(A)$ the smallest eigenvalue of a square matrix $A$. We consider the gradient $\frac{\pa h}{\pa x} \in \R^{n\times 1}$ as a row vector, where $x\in\R^n$ and $h:\R^n\to\R$ is a function with respect to $x$.

\subsection{Control Barrier Functions \& Barrier Lyapunov Functions}
CBFs and BLFs are two types of barrier functions that are widely used to ensure the controlled invariance of a given set \cite{ames2016control,tee2009barrier}. Our approach aims to combine the advantages of both CBFs and BLFs, which are briefly reviewed below.

\subsubsection{Control Barrier Functions}
Consider a control-affine system given as $\dot x = f(x)+g(x)u$ 
where $x\in\R^n$ is the state, $u\in U\subset \R^m$ is the control input, and $f: \mathbb{R}^n\to\mathbb{R}^n$ and $g:\mathbb{R}^n\to\mathbb{R}^{n\times m}$ are locally Lipchitz continuous functions. 
Define a \emph{safe set} $\mathcal{C}=\{ x \in \R^n: h(x) \geq 0\}$ 
where $h$ is a continuously differentiable function. 
The function $h$ is called a \emph{(zeroing) CBF} of relative degree 1, if  there exists a constant $\gamma>0$ such that $\sup_{u \in U}  \left[ L_f h(x) + L_g h(x) u + \gamma h(x)\right] \geq 0$  
where $L_fh(x)=\frac{\pa h}{\pa x} f(x)$ and $L_gh(x)=\frac{\pa h}{\pa x} g(x)$ are Lie derivatives \cite{Xu2015ADHS}. In this paper, we assume there is no constraint on the input $u$, i.e., $U=\R^m$. The following result that guarantees the forward invariance of $\mathcal{C}$ is given in \cite{Xu2015ADHS}.
\begin{lemma}\cite[Corollary 7]{Xu2015ADHS}
If $h$ is a (zeroing) CBF on $\R^n$, then any Lipschitz continuous controller $u: \R^n\to U$ such that $u(x) \in K(x)\triangleq \{ u\in U \mid L_f h(x) + L_g h(x) u + \gamma h(x) \geq 0\}$ will guarantee the forward invariance of $\mathcal{C}$, i.e., the \emph{safety} of the closed-loop system.     
\end{lemma}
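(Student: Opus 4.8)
The plan is to show that the scalar signal $t \mapsto h(x(t))$ remains nonnegative along every closed-loop trajectory that starts in $\mathcal{C}$, by converting the CBF membership condition $u(x)\in K(x)$ into a differential inequality for $h$ and then invoking the comparison lemma. Forward invariance of $\mathcal{C}=\{x:h(x)\geq 0\}$ is exactly the statement that $h$ never becomes negative, so this is the natural quantity to track.

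First I would establish well-posedness of the closed loop. Since $f$ and $g$ are locally Lipschitz and $u$ is a Lipschitz continuous controller, the composite vector field $x\mapsto f(x)+g(x)u(x)$ is locally Lipschitz, so for each initial condition $x(0)=x_0$ there is a unique maximal solution $x(t)$ on some interval $[0,T_{\max})$. Because $h\in C^1$, the map $t\mapsto h(x(t))$ is continuously differentiable on this interval, and by the chain rule
\[
\frac{\di}{\di t}\,h(x(t)) = \frac{\pa h}{\pa x}\big(f(x)+g(x)u(x)\big) = L_f h(x) + L_g h(x)\,u(x).
\]
Invoking the hypothesis $u(x)\in K(x)$ for all $x$, the defining inequality of $K(x)$ gives $L_f h(x)+L_g h(x)u(x)\geq -\gamma\,h(x)$, hence
\[
\frac{\di}{\di t}\,h(x(t)) \geq -\gamma\,h(x(t)),\qquad t\in[0,T_{\max}).
\]

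Next I would apply the comparison lemma to this linear differential inequality. The comparison system $\dot y=-\gamma y$ with $y(0)=h(x_0)$ has the explicit solution $y(t)=h(x_0)e^{-\gamma t}$, so the comparison lemma yields $h(x(t))\geq h(x_0)e^{-\gamma t}$ for all $t\in[0,T_{\max})$. If $x_0\in\mathcal{C}$ then $h(x_0)\geq 0$, and since $e^{-\gamma t}>0$ we conclude $h(x(t))\geq 0$, i.e. $x(t)\in\mathcal{C}$, throughout the interval of existence. This establishes forward invariance, which is precisely the claimed safety.

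The main obstacle I anticipate is not the substantive inequality, which the CBF condition delivers directly, but the bookkeeping that ties together existence of solutions and invariance: one must argue that the differential inequality holds on the \emph{entire} maximal interval and that confinement to the closed safe set rules out finite-time escape, so that the bound $h(x(t))\geq h(x_0)e^{-\gamma t}$ is valid for all $t\geq 0$ rather than merely on a possibly short interval. Once this standard Lipschitz/comparison machinery is in place, the conclusion follows immediately.
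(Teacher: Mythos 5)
Your proposal is correct and follows exactly the standard argument behind the cited result: the CBF membership condition gives the differential inequality $\tfrac{\di}{\di t}h(x(t))\geq-\gamma h(x(t))$, and the comparison lemma yields $h(x(t))\geq h(x(0))e^{-\gamma t}\geq 0$. The paper does not reprove this lemma (it is imported verbatim from \cite[Corollary 7]{Xu2015ADHS}), and your argument, including the caveat about maximal intervals of existence, reproduces the canonical proof given there.
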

By including the CBF condition into a convex QP, the provably safe controller is obtained by solving a CBF-QP online.  The time-varying CBF and its safety guarantee for a time-varying system are discussed in \cite{xu2018constrained}.

\subsubsection{Barrier Lyapunov Function}\label{sec:blf}
In contrast to CBFs, BLFs are positive definite functions that are more tightly connected with Lyapunov functions.
\begin{definition} \cite[Definition 2]{tee2009barrier}
A barrier Lyapunov function is a scalar function $V(x)$, defined
with respect to the system $\dot x = f (x)$ on an open region $\mathcal{D}$ containing the origin, that is continuous, positive definite, has
continuous first-order partial derivatives at every point of $\mathcal{D}$, has the property $V(x)\to\infty$ as $x$ approaches the boundary of $\mathcal{D}$, and satisfies $V(x(t))\leq b$ for any $t>0$ along the solution of $\dot x = f (x)$ for $x(0)\in\mathcal{D}$ and some positive constant $b$.
\end{definition}

The following lemma is used for BLF control design to guarantee that constraints on the output or state are satisfied. 

\begin{lemma}\label{lemmablf}\cite[Lemma 1]{tee2009barrier}
For any positive constants $k_{a_1}$, $k_{b_i}$, let $\mathcal{Z}_1\triangleq \{z_1\in\R: -k_{a_1}< z_1<k_{b_1}\}\subset \R$ and $\mathcal{N}\triangleq \R^l\times \mathcal{Z}_1\subset \R^{l+1}$ be open sets. Consider the system $\dot \eta = h(\eta,t)$ 
where $\eta\triangleq[w,z_1^\top]\in\mathcal{N}$, and $h:\R_{\geq 0}\times\mathcal{N}\to\R^{l+1}$ is  piecewise continuous in $t$ and locally Lipschitz in $z$, uniformly in $t$, on $\R_{\geq 0}\times\mathcal{N}$. Suppose that there exist functions $U:\R^l\to\R_{\geq 0}$ and $V_1:\mathcal{Z}_1\to\R_{\geq 0}$, continuously differentiable and positive definite in their respective domains, such that $V_1(z_1)\to\infty$ as $\ z_1\to-k_{a_1}$ or $z_1\to k_{b_1}$, and $\ga_1(\|w\|)\leq U(w)\leq\ga_2(\|w\|)$, 
where $\ga_1$ and $\ga_2$ are class $\mathcal{K}_\infty$ functions. Let $V(\eta)\triangleq V_1(z_1)+U(w)$, and $z_1(0)$ belong to the set $z_1\in(-k_{a_1},k_{b_1})$. If the inequality $\dot V\leq \frac{\pa V}{\pa \eta}h\leq 0$ holds, 
then $z_1(t)$ remains in the open set $z_1\in (-k_{a_1},k_{b_1})$, $\forall t\in[0,\infty)$.
\end{lemma}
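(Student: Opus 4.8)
The plan is to exploit the monotone non-increase of $V$ together with the blow-up property of $V_1$ at the boundary of $\mathcal{Z}_1$. First I would invoke the regularity hypotheses on $h$ --- piecewise continuity in $t$ and local Lipschitzness in $z$ uniformly in $t$ --- to guarantee, via the standard existence--uniqueness theorem for ordinary differential equations, that there is a unique maximal solution $\eta(t)$ on some interval $[0,\tau)$ emanating from $\eta(0)\in\mathcal{N}$. Because $z_1(0)\in(-k_{a_1},k_{b_1})$ (so $V_1(z_1(0))$ is finite) and $U$ is finite-valued, the initial value $V(\eta(0))=:b$ is a finite nonnegative constant.

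Next, integrating the hypothesis $\dot V\leq 0$ along the solution gives $V(\eta(t))\leq V(\eta(0))=b$ for all $t\in[0,\tau)$. Since $V=V_1+U$ with both summands nonnegative, this yields the two separate bounds $V_1(z_1(t))\leq b$ and $U(w(t))\leq b$. The barrier property $V_1(z_1)\to\infty$ as $z_1\to-k_{a_1}$ or $z_1\to k_{b_1}$ then forces $z_1(t)$ to remain in a compact subset $[-k_{a_1}+\delta_1,\,k_{b_1}-\delta_2]\subset(-k_{a_1},k_{b_1})$ for some $\delta_1,\delta_2>0$ determined by $b$ and the continuous, positive-definite shape of $V_1$; equivalently, $z_1$ can never reach either endpoint. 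Simultaneously, from $\ga_1(\|w\|)\leq U(w)\leq b$ and the fact that $\ga_1$ is class $\mathcal{K}_\infty$ (hence invertible with $\ga_1^{-1}$ increasing), I obtain $\|w(t)\|\leq\ga_1^{-1}(b)$, so $w$ is uniformly bounded as well.

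The final --- and most delicate --- step is upgrading existence on $[0,\tau)$ to existence on all of $[0,\infty)$. The two bounds just derived show that the trajectory $\eta(t)$ stays inside a fixed compact subset of the open domain $\mathcal{N}$, independent of $t$. I would then apply the standard continuation (maximal-interval) theorem: if a solution remains in a compact subset of the open set on which the right-hand side satisfies the existence--uniqueness conditions, it cannot cease to exist in finite time, whence $\tau=\infty$. The main obstacle is precisely this continuation argument, because it requires confirming that $z_1$ is bounded away from the boundary of $\mathcal{Z}_1$ \emph{uniformly} in $t$ --- not merely that $V_1$ stays finite pointwise --- and that $w$ does not escape to infinity; both follow from the time-independent bound $b$, which is why establishing $V(\eta(t))\leq V(\eta(0))$ at the outset is the crux. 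Once $\tau=\infty$ is secured, the conclusion that $z_1(t)\in(-k_{a_1},k_{b_1})$ for all $t\in[0,\infty)$ is immediate from the compact-subset containment.
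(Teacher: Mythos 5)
Your proof is correct; the paper states this lemma without proof, citing \cite[Lemma 1]{tee2009barrier}, and your argument---existence of a maximal solution, monotone non-increase of $V$ giving the uniform bound $V_1(z_1(t))\leq b$, the blow-up of $V_1$ at $\pm$ boundary forcing $z_1$ into a compact subset of $(-k_{a_1},k_{b_1})$, boundedness of $w$ via $\gamma_1^{-1}(b)$, and the continuation theorem to conclude $\tau=\infty$---is exactly the standard argument given in that reference. No gaps.
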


\subsection{Euler-Lagrange Systems}
Consider an EL system given as follows \cite{ortega2013passivity,spong2006robot}:
\begin{IEEEeqnarray}{rCl}
\IEEEyesnumber\label{elsys:total}
\IEEEyessubnumber \label{elsys1}
       \dot q &=& \omega,\\
\IEEEyessubnumber\label{elsys2}
\dot \omega&=& M^{-1}(q) \left(\tau-C(q,\omega)\omega-G(q)+\tau_d\right),
\end{IEEEeqnarray}
where $q\in\R^n$ is the generalized coordinate, $\omega\in\R^n$ is the generalized velocity, $\tau\in\R^n$ is the control input, $\tau_d:\R_{\geq 0}\to\R^n$ is the external disturbance, $M:\R^n\to\R^{n\times n}$ is the inertia matrix, $C:\R^n\times\R^n\to\R^{n\times n}$ is the Coriolis/centripetal matrix, and $G:\R^n\to\R^n$ is the gravity term. We assume that the exact knowledge of the velocity $\omega$ is not known, and denote the measured generalized velocity as $\hat \omega$ (e.g., 
in some application scenarios, $\omega$ is obtained by numerically differentiating $q$ such that it may be contaminated by measurement noise); 
therefore the velocity measurement uncertainty can be defined as 
$$
\xi=\omega -\hat \omega.
$$ 
Furthermore, we assume that $\tau_d$, $\xi$, and $\dot\xi$ are all bounded. 
\begin{assumption}\label{assumptiontaud}
The disturbance $\tau_d$ satisfies $\|\tau_d\|\leq D_0$ where $D_0>0$ is a positive constant.
\end{assumption}

\begin{assumption}\label{assumptionnoise}
The measurement uncertainty $\xi$ and its derivative $\dot \xi$ are bounded as $\|\xi\|\leq D_1$ and $\|\dot\xi\|\leq D_2$, where $D_1$ and $D_2$ are positive constants.
\end{assumption}

Note that Assumption \ref{assumptiontaud} is extensively used in the robust control literature, and numerous state estimation techniques have been developed to ensure that the state estimation error is bounded.

The system given in \eqref{elsys:total} has the following properties that will be exploited in the subsequent control design \cite{dixon2007adaptive}.
\begin{property}[P1]\label{propertymass}
The matrix $M$ is positive definite,  symmetric, and satisfies 
\begin{align}\label{eqM}
\la_1\|q\|^2\leq q^\top M(q)q\leq\la_2\|q\|^2,\quad \forall q\in\R^n,    
\end{align}
where $\la_1$, $\la_2$ are positive constants.
\end{property}

\begin{property}[P2]\label{propertygravity}
The matrices $C(q,\omega)$ and $G(q)$ satisfy
\begin{align}
&\|C(q,\omega)\|\leq \zeta_c\|\omega\|, \quad
\|G(q)\|\leq \zeta_g,\quad\forall q,\omega\in\R^n,\label{eqCG}
\end{align}
where $\zeta_c$ and $\zeta_g$ are positive constants.
\end{property}

\subsection{Problem Statement}\label{sec:motivation}
In this work, we consider provably safe control design for an EL system given in \eqref{elsys:total} with \emph{limited information}. Specifically, we assume that the matrices $M,C,G$ in \eqref{elsys:total} are unknown and satisfy inequalities \eqref{eqM} and \eqref{eqCG} but only $\la_2$ is known. With such an EL system, the first problem we aim to solve is to design a feedback controller based on the knowledge of $q$ and $\hat \omega$ to ensure the safety of the system in the joint space.

\begin{problem}\label{probjoint}
Consider an EL system described by \eqref{elsys:total} where the matrices $M,C,G$ are unknown, and a joint space safe set $\C_q$ defined as
\begin{equation}\label{setcjoint}
    \C_q=\{q\in\R^n: h(q)\geq 0\},
\end{equation}
where $h$ is a twice differentiable function. Suppose that Assumptions \ref{assumptiontaud} and \ref{assumptionnoise} hold with $D_0$, $D_2$ unknown, and $M,C,G$ satisfy inequalities \eqref{eqM} and \eqref{eqCG} with constant $\la_2$ known  and constants $\la_1,\zeta_c,\zeta_g$ unknown. Design a feedback control law $\tau(q(t),\hat \omega(t),t)$ such that  the closed-loop system is always safe with respect to $\C_q$, i.e., $h(q(t))\geq 0,\forall t\geq 0$.
\end{problem}

The second problem we aim to solve is about designing a safe controller in the task space. 
\begin{problem}\label{probtask}
Consider an EL system described by \eqref{elsys:total} where the matrices $M,C,G$ are unknown, and the forward kinematics of the EL system:
\begin{equation}
    p = f(q),\label{forwardkine}
\end{equation}
where $p\in\R^k$ denotes the 
variable of the task space 
and $f:\R^n\to\R^k$  represents a continuously differentiable function with $k\leq n$. Consider a task space safe set $\C_p$ defined as
\begin{equation}\label{setctask}
    \C_p=\{p\in\R^p: h(p)\geq 0\},
\end{equation}
where $h$ is a twice differentiable function.  Suppose that Assumptions \ref{assumptiontaud} and \ref{assumptionnoise} hold with $D_0,D_2$ unknown, and $M,C,G$ satisfy inequalities \eqref{eqM} and \eqref{eqCG} with constant $\la_2$ known  and constants $\la_1,\zeta_c,\zeta_g$ unknown. Design a feedback control law $\tau(p(t), q(t),\hat \omega(t),t)$ such that  the closed-loop system is safe with respect to $\C_p$, i.e., $h(p(t))\geq 0,\forall t\geq 0$.
\end{problem}

The main difficulty of Problems \ref{probjoint} and \ref{probtask} lies in the limited information of the EL system: $\la_1$, $D_0$, $D_2$, $\zeta_c$, $\zeta_g$ are assumed to be unknown in control design. The proposed controller in this work is highly robust to model uncertainties and can be easily transferred between different EL systems without re-designing the control laws. Existing safe control design approaches for EL systems are not applicable to solve the problems in this work because they rely on the exact forms of $M$, $C$, $G$ or the values of  $\la_1$, $D_0$, $D_2$, $\zeta_c$, $\zeta_g$; see  \cite{cortez2022safe,capelli2022passivity,barbosa2020provably,ferraguti2022safety,farras2021safe,singletary2021safety,molnar2021model} for more details.

\section{Joint Space Safe Control}
\label{sec:joint}
In this section, a novel proxy-CBF-BLF-based method will be presented to solve Problem \ref{probjoint} for the EL system with limited information, external disturbances, and measurement uncertainties. We will show the main idea of the method in Subsection \ref{sub:overview}, propose an adaptive BLF-based control design approach for the virtual tracking subsystem in Subsection \ref{sub:jointblf}, and
a CBF-based control design strategy for the proxy subsystem in Subsection \ref{sec:jointcbf}. 

\subsection{Method Overview}\label{sub:overview}
The main idea of our method is to decompose an EL system into two subsystems, called the proxy\footnote{The term ``proxy'' is inspired by proxy-based sliding mode control \cite{kikuuwe2010proxy} and haptic rendering \cite{ruspini1997haptic}.} subsystem and the virtual tracking subsystem, and use the CBF and BLF to design safe controllers for the two subsystems, respectively, such that the overall controller will ensure the safety of the EL system (see Fig. \ref{fig:taskillustraion} for illustration). 

The proxy subsystem is given as:
\begin{IEEEeqnarray}{rCl}
\IEEEyesnumber\label{proxy}
\IEEEyessubnumber\label{proxy:1}
       \dot q&=& \mu + e_q+\xi,\\
\IEEEyessubnumber\label{proxy:2}
        \dot\mu&=& \nu,
\end{IEEEeqnarray}
where $\mu$ is the virtual safe velocity with $\mu(0)=\hat\omega(0)$, $e_q$ is the virtual velocity tracking error defined as
\begin{align}
   e_q=\hat \omega-\mu, 
\end{align} 
and $\nu$ is the virtual control input to be designed. Note that \eqref{proxy} is equivalent to \eqref{elsys1} augmented with an integrator. 

The virtual tracking subsystem is given as:
\begin{equation}\label{vts}
    \dot e_q=M(q)^{-1} (\tau-C(q,\omega)\omega -G(q)+\tau_d) - \dot\xi-\nu
\end{equation}
where $\tau$ is the control input to be designed and $\nu$ is from the proxy subsystem \eqref{proxy}.

With this decomposition, Problem \ref{probjoint} can be solved by accomplishing two tasks shown as follows.

\begin{task}\label{probblf}
For the virtual tracking subsystem \eqref{vts}, design a controller $\tau$ to guarantee 
\begin{align}\label{boundL}
\|e_q(t)\|<L,\forall t\geq 0,    
\end{align}
where $L>0$ is an arbitrary positive constant.
\end{task}

\begin{task}\label{probproxy system}
For the proxy subsystem \eqref{proxy}, design a control law $\nu$ to ensure $h(q(t))\geq 0,\forall t\geq 0$, under the assumption that $\|e_q(t)\|<L,\forall t\geq 0$.
\end{task}

\begin{remark}
In \cite{molnar2021model}, a safe velocity is designed based on reduced-order kinematics, which is similar to \eqref{proxy:1} in our proxy subsystem. However, including an additional integrator as shown in \eqref{proxy:2} is important because $\dot\mu$, which is equal to $\nu$,  is required in the virtual tracking subsystem \eqref{vts} and $L$ can be selected to be arbitrarily small, thereby reducing the potential conservatism of the safe controller (see Remark \ref{remark:BLF}).
Nevertheless, the added integrator will result in a system with a mismatched virtual disturbance, $e_q+\xi$; a new CBF-based safe control scheme will be proposed for such a system in Section \ref{sec:jointcbf}.
\end{remark}

\subsection{BLF-based Control For the Virtual Tracking Subsystem}\label{sub:jointblf}
In this subsection, an adaptive BLF-based controller will be presented to accomplish Task \ref{probblf}. The BLF-based method is suitable for this task because
it does not rely on the bounds of the unknown parameters and the external disturbances.

Inspired by our previous work \cite{wang2023adaptive}, the following theorem presents a controller $\tau$ for the virtual tracking subsystem to ensure $\|e_q(t)\|<L,\forall t\geq 0$. 

\begin{theorem}\label{theorem:blf}
Consider the virtual tracking subsystem \eqref{vts} where the matrices $M,C,G$ are unknown. Suppose that Assumptions \ref{assumptiontaud} and \ref{assumptionnoise} hold with $D_0$, $D_2$ unknown, and $M,C,G$ satisfy inequalities \eqref{eqM} and \eqref{eqCG} with constant $\la_2$ known  and constants $\la_1,\zeta_c,\zeta_g$ unknown. Suppose that the controller $\tau$ is designed as
\begin{align}
\tau &=-\la_2e_q\mathcal{N}\label{blfu}
\end{align}
where 
\begin{IEEEeqnarray}{rCl}
\IEEEyesnumber\label{blfcontrol}
\IEEEyessubnumber\label{blfadaptive0}\hspace{-4mm}
\mathcal{N}\!&=&\!k_1\!+\! 
\frac{(\hat\ta_1\varphi)^2}{\hat\ta_1\varphi\|e_q\|\!+\!\ep_1}\!+\!\frac{\hat\ta_2^2}{\hat\ta_2\|e_q\|\!+\!\ep_2}\!+\!\frac{\|\nu\|^2}{\|e_q\|\|\nu\|\!+\!\ep},\\
\IEEEyessubnumber\label{blfadaptive1}       \hspace{-4mm}\dot{\hat{\ta}}_1&=& -\ga_\ta\hat\ta_1+\frac{\|e_q\|\varphi}{L^2-\|e_q\|^2},\\
       \IEEEyessubnumber\label{blfadaptive2}
\hspace{-4mm}\dot{\hat{\ta}}_2&=& -\ga_\ta\hat\ta_2+\frac{\|e_q\|}{L^2-\|e_q\|^2}, 
\end{IEEEeqnarray}
and $\varphi=(\|\hat \omega\|+D_1)^2$, 
with positive constants $\ep, \ep_1,\ep_2, \ga_\ta>0$, $k_1>\frac{\Lambda}{L^2}$, and $\Lambda=\ep+\ep_1+\ep_2$.
If $\hat\ta_1(0),\hat\ta_2(0)> 0$, then $\|e_q(t)\|<L$ for any $t\geq 0$. 
\end{theorem}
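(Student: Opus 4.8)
The plan is to construct a composite barrier Lyapunov function and establish a dissipation inequality of the form $\dot V\le -\alpha V+\beta$, from which boundedness of $V$, and hence $\|e_q\|<L$, follows directly. First I would introduce the log-type barrier $V_1=\fot\log\frac{L^2}{L^2-\|e_q\|^2}$, which tends to $+\infty$ as $\|e_q\|\to L$, and augment it with quadratic parameter-error terms into $V=V_1+\fot\tilde\ta_1^2+\fot\tilde\ta_2^2$, where $\tilde\ta_i=\hat\ta_i-\ta_i$ and $\ta_1,\ta_2$ are the (unknown) effective bounds $\ta_1=\zeta_c/\la_1$ and $\ta_2=(\zeta_g+D_0)/\la_1+D_2$ extracted from Properties P1, P2 and Assumptions \ref{assumptiontaud}, \ref{assumptionnoise}. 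Since $e_q(0)=\hat\omega(0)-\mu(0)=0$ and $\hat\ta_i(0)>0$, the initial value $V(0)$ is finite. I would also record that each $\hat\ta_i(t)>0$ remains positive, because the adaptive laws \eqref{blfadaptive1} and \eqref{blfadaptive2} give $\dot{\hat\ta}_i\ge 0$ whenever $\hat\ta_i=0$; this keeps $\mathcal{N}>0$ and validates the damping inequalities below.

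Next I would differentiate to get $\dot V_1=\frac{e_q^\top\dot e_q}{L^2-\|e_q\|^2}$, substitute the dynamics \eqref{vts} and the controller \eqref{blfu}. The control term contributes $-\la_2\mathcal{N}\,e_q^\top M^{-1}e_q/(L^2-\|e_q\|^2)$; using $e_q^\top M^{-1}e_q\ge \|e_q\|^2/\la_2$ from P1, this is bounded above by $-\mathcal{N}\|e_q\|^2/(L^2-\|e_q\|^2)$, so only the known constant $\la_2$ is required. The remaining drift is bounded using $\|M^{-1}\|\le 1/\la_1$ together with $\|C\omega\|\le\zeta_c(\|\hat\omega\|+D_1)^2=\zeta_c\varphi$, $\|G\|\le\zeta_g$, $\|\tau_d\|\le D_0$, and $\|\dot\xi\|\le D_2$, yielding the upper bound $\frac{\|e_q\|}{L^2-\|e_q\|^2}\big(\ta_1\varphi+\ta_2+\|\nu\|\big)$.

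The crux, and the step I expect to be the main obstacle, is the nonlinear-damping cancellation. For each uncertain term I would use the elementary identity $\frac{ar}{c}-\frac{r^2}{c}\frac{a^2}{ar+\ep_\star}=\frac{ar\ep_\star}{c(ar+\ep_\star)}\le\frac{\ep_\star}{c}$ with $r=\|e_q\|$ and $c=L^2-\|e_q\|^2$. For the $\|\nu\|$ term (taking $a=\|\nu\|$, $\ep_\star=\ep$) this applies immediately. For the parametric terms I would first split $\ta_1\varphi=\hat\ta_1\varphi-\tilde\ta_1\varphi$ and $\ta_2=\hat\ta_2-\tilde\ta_2$, so the damping built from $\hat\ta_i$ in $\mathcal{N}$ absorbs $\hat\ta_i$ up to $\ep_i/c$, leaving the cross terms $-\tilde\ta_1\frac{r\varphi}{c}$ and $-\tilde\ta_2\frac{r}{c}$. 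These are cancelled exactly by $+\tilde\ta_i\dot{\hat\ta}_i$ when I add $\frac{d}{dt}\fot\tilde\ta_i^2$, since the adaptive laws carry the matching regressor terms $\frac{\|e_q\|\varphi}{L^2-\|e_q\|^2}$ and $\frac{\|e_q\|}{L^2-\|e_q\|^2}$. The $\sigma$-modification part $-\ga_\ta\tilde\ta_i\hat\ta_i$ is then handled by Young's inequality to produce $-\frac{\ga_\ta}{2}\tilde\ta_i^2+\frac{\ga_\ta}{2}\ta_i^2$.

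Finally I would collect terms into $\dot V\le \frac{-k_1\|e_q\|^2+\Lambda}{L^2-\|e_q\|^2}-\frac{\ga_\ta}{2}(\tilde\ta_1^2+\tilde\ta_2^2)+\frac{\ga_\ta}{2}(\ta_1^2+\ta_2^2)$, using $\Lambda=\ep+\ep_1+\ep_2$. Rewriting the barrier term as $k_1-\frac{k_1L^2-\Lambda}{L^2-\|e_q\|^2}$ and invoking the hypothesis $k_1>\Lambda/L^2$ (so $k_1L^2-\Lambda>0$), together with the bounds $V_1\le\fot\frac{\|e_q\|^2}{L^2-\|e_q\|^2}$ and $\frac{1}{L^2-\|e_q\|^2}\ge\frac{1}{L^2}(1+2V_1)$, I can show this term is $\le -\kappa V_1+\Lambda/L^2$ for some $\kappa>0$. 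Setting $\alpha=\min\{\kappa,\ga_\ta\}$ gives $\dot V\le -\alpha V+\beta$ with $\beta=\Lambda/L^2+\frac{\ga_\ta}{2}(\ta_1^2+\ta_2^2)$. The comparison lemma then yields $V(t)\le\max\{V(0),\beta/\alpha\}<\infty$ for all $t$, so $V_1$ stays bounded; since $V_1\to\infty$ as $\|e_q\|\to L$, the error must satisfy $\|e_q(t)\|<L$ for all $t\ge 0$.
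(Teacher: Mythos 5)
Your proposal is correct and follows essentially the same route as the paper's proof: the same barrier Lyapunov function $V=\fot\log\frac{L^2}{L^2-\|e_q\|^2}+\fot\tilde\ta_1^2+\fot\tilde\ta_2^2$ with the same lumped parameters $\ta_1=\zeta_c/\la_1$, $\ta_2=(\zeta_g+D_0)/\la_1+D_2$, the same use of $e_q^\top M^{-1}e_q\geq\la_2^{-1}\|e_q\|^2$ so that only $\la_2$ is needed, the same nonlinear-damping identity $A-\frac{A^2}{A+\ep}\leq\ep$, and the same reduction to $\dot V\leq-\kappa V+K$ followed by the comparison/BLF argument. The only deviations are cosmetic (opposite sign convention for $\tilde\ta_i$ and a slightly different phrasing of the positivity argument for $\hat\ta_i$), neither of which affects correctness.
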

\begin{proof}
From \eqref{blfadaptive1}-\eqref{blfadaptive2}, $\dot {\hat{\ta}}_{1} \geq -\gamma_\ta \hat\ta_1,\dot {\hat{\ta}}_{1} \geq -\gamma_\ta \hat\ta_2$ hold in the open set $\mathcal{Z}_L\triangleq\{e_q\in\R^n\mid \|e_q\|<L\}$. Since $\hat\ta_{1}(0)>0,\hat\ta_{2}(0)>0$, it is easy to see that $\hat\ta_{1}(t)\geq 0$ and $\hat\ta_{2}(t)\geq 0$ for any $t\geq 0$ by the Comparison Lemma \cite[Lemma 2.5]{khalil2002nonlinear}.

Define
$\ta_1=\zeta_c\la_1^{-1}$ and $\ta_2=\la_1^{-1}(\zeta_g+D_0)+D_2$, which are unknown parameters because $\la_1,\zeta_c,\zeta_g, D_2$ are unknown. Define a candidate BLF as
\begin{equation}\label{blf2}
    V=\fot\log\left(\frac{L^2}{L^2-\| e_q\|^2}\right)+ \fot\tilde{\ta}_1^2+\fot\tilde{\ta}_2^2,
\end{equation}
where $\tilde\ta_1=\ta_1-\hat\ta_1$, $\tilde\ta_2=\ta_2-\hat\ta_2$. The derivative of $V$ in the open set $\mathcal{Z}_L$ can be expressed as 
\begin{IEEEeqnarray}{rCl}
       \dot V&=&\frac{e_q^\top}{L^2-\| e_q\|^2}(M^{-1} (\tau-C(q, \omega)\omega -G(q)+\tau_d)\nonumber\\
       &&- \dot\xi-\nu)- \tilde\ta_1\dot{\hat{\ta}}_1- \tilde\ta_2\dot{\hat{\ta}}_2 \nonumber\\
       &\leq& \frac{e_q^\top M^{-1} \tau}{L^2-\| e_q\|^2}+\frac{\|e_q\|}{L^2-\|e_q\|^2}(\|M^{-1}\|(\|C(q, \omega) \omega\|\nonumber\\
       &&+\|G\|+\|\tau_d\|)+\|\dot\xi\|+\|\nu\|)- \tilde\ta_1\dot{\hat{\ta}}_1- \tilde\ta_2\dot{\hat{\ta}}_2\nonumber\\
       &\leq& \frac{e_q^\top M^{-1} \tau}{L^2-\|e_q\|^2}+\frac{\|e_q\|}{L^2-\|e_q\|^2}(\la_1^{-1}(\zeta_c(\|\hat \omega\|+D_1)^2\nonumber\\
       &&+\zeta_g+D_0)+D_2+\|\nu\|)- \tilde\ta_1\dot{\hat{\ta}}_1- \tilde\ta_2\dot{\hat{\ta}}_2 \nonumber\\
       &=&\frac{e_q^\top M^{-1} \tau}{L^2-\|e_q\|^2}+\frac{\|e_q\|}{L^2-\|e_q\|^2}(\hat\ta_1\varphi+\hat\ta_2+\|\nu\|)\nonumber\\
       &&-\tilde\ta_1\!\left(\!\dot{\hat{\ta}}_1\!-\!\frac{\|e_q\|\varphi}{L^2\!-\!\|e_q\|^2}\!\right)\!-\!\tilde\ta_2\!\left(\!\dot{\hat{\ta}}_2\!-\!\frac{\|e_q\|}{L^2\!-\!\|e_q\|^2}\!\right),\label{dotv1}
\end{IEEEeqnarray}
where the second inequality comes from 
\begin{equation*}
    \|C(q,\omega)\omega\|\stackrel{{\rm (P1)}}{\leq} \zeta_c\|\omega\|^2=\zeta_c\|\hat \omega+\xi\|^2\leq \zeta_c(\|\hat \omega\|+D_1)^2,
\end{equation*}
and the third inequality arises from the fact $\la_2^{-1}\leq\|M(q)^{-1}\|\leq \la_1^{-1}$ for any $q\in\R^n$, according to Property \ref{propertymass}. 
Substituting \eqref{blfcontrol} into \eqref{dotv1} yields
\begin{IEEEeqnarray}{rCl}
       \dot V&\leq&\frac{1}{L^2\!-\!\|e_q\|^2}\bigg(\!
       -\la_2\underbrace{(e_q^\top\! M^{-1}e_q)}_{\geq \la_2^{-1}\|e_q\|^2}
       \bigg(\! k_1\!+\!\underbrace{
       \frac{(\hat\ta_1\varphi)^2}{\hat\ta_1\varphi\|e_q\|\!+\!\ep_1}}_{\geq 0}
       \nonumber\\
       &&\!+\!\underbrace{\frac{(\hat\ta_2)^2}{\hat\ta_2\|e_q\|\!+\!\ep_2}}_{\geq 0}\!+\!\underbrace{\frac{\|\nu\|^2}{\|e_q\|\|\nu\|\!+\!\ep}}_{\geq 0}
       \bigg)\!+\! \|e_q\|(\hat\ta_1\varphi\!+\!\hat\ta_2\!+\!\|\nu\|)\bigg)\nonumber\\
       &&+\ga_\ta(\tilde\ta_1\hat\ta_1+\tilde\ta_2\hat\ta_2)\nonumber\\
       &\leq&\frac{1}{L^2-\|e_q\|^2}\bigg(
       -k_1\|e_q\|^2+\left(
       \hat\ta_1\varphi\|e_q\|-\frac{(\hat\ta_1\varphi\|e_q\|)^2}{\hat\ta_1\varphi\|e_q\|+\ep_1}\right)\nonumber\\
       &&+\left(\!\hat\ta_2\|e_q\|\!-\!\frac{(\hat\ta_2\|e_q\|)^2}{\hat\ta_2 \|e_q\|\!+\!\ep_2}\right)\!+\!\left(\!\|e_q\|\|\nu\|\!-\!\frac{(\|e_q\|\|\nu\|)^2}{\|e_q\|\|\nu\|\!+\!\ep}\!\right)
       \bigg)\nonumber\\
       &&+\ga_\ta(\tilde\ta_1\hat\ta_1+\tilde\ta_2\hat\ta_2)\nonumber\\
       &\leq& \frac{1}{L^2-\|e_q\|^2}\left(-k_1\|e_q\|^2+\Lambda\right)+\ga_\ta(\tilde\ta_1\hat\ta_1+\tilde\ta_2\hat\ta_2),\label{dotv2}
\end{IEEEeqnarray}
where the last inequality comes from the fact that for any $A\geq 0,\ep> 0$,
$A-\frac{A^2}{A+\ep}=\frac{A\ep}{A+\ep}\leq \ep$ holds true. Noting that $\frac{\Lambda}{L^2-\|e_q\|^2}=\frac{\Lambda}{L^2}+\frac{\Lambda}{L^2}\frac{\|e_q\|^2}{L^2-\|e_q\|^2}$ and $\tilde\ta_i\hat\ta_i=\tilde\ta_i(\ta_i-\tilde\ta_i)\leq\frac{\ta_i^2-\tilde\ta_i^2}{2},\ i=1,2$, we have
\begin{IEEEeqnarray*}{rCl}
       \dot V&\leq& -\frac{\chi\|e_q\|^2}{L^2-\|e_q\|^2}-\frac{\ga_\ta}{2}\sum_{i=1}^2\tilde{\ta}_i^2+\frac{\Lambda}{L^2}+\frac{\ga_\ta}{2}\sum_{i=1}^2\ta_i^2\nonumber\\
       &\leq& -\chi\log\left(
       \frac{L^2}{L^2-\|e_q\|^2}
       \right)-\frac{\ga_\ta}{2}\sum_{i=1}^2\tilde{\ta}_i^2+\frac{\Lambda}{L^2}+\frac{\ga_\ta}{2}\sum_{i=1}^2\ta_i^2\nonumber\\
       &\leq& -\kappa V+K,
\end{IEEEeqnarray*}
where $\chi=k_1-\frac{\Lambda}{L^2}$,
$K=\frac{\Lambda}{L^2}+\frac{\ga_\ta}{2}\sum_{i=1}^2\ta_i^2$, $\kappa=\min\{2\chi,\ga_\theta\}$, and the second inequality comes from the fact that $\log\frac{L^2}{L^2-\|e_q\|^2}\leq \frac{\|e_q\|^2}{L^2-\|e_q\|^2}$ holds in the open set $\mathcal{Z}_L$ \cite[Lemma 2]{ren2010adaptive}. Thus, $V(t)$ is bounded, which implies  $\|e_q(t)\|<L$ for any $t\geq 0$, according to Lemma \ref{lemmablf}.
\end{proof}

\subsection{CBF-based Control For the Proxy Subsystem}
\label{sec:jointcbf}
In this subsection, a CBF-based control law is presented to solve Task \ref{probproxy system}. Note that designing a CBF-based controller for accomplishing Task \ref{probproxy system} is challenging because the term $e_q+\xi$ is considered as a bounded mismatched disturbance to the proxy subsystem and its derivative, $\dot e_q+\dot\xi$, is not necessarily bounded.

Since in Task \ref{probproxy system} we assume $\|e_q\|<L$ holds, which is ensured by Theorem \ref{theorem:blf}, the term $e_q+\xi$ is bounded as
\begin{equation}
    \|e_q+\xi\|< D_1+L. \label{exi}
\end{equation}

The following theorem provides a CBF-based controller $\nu$ that ensure  $h(q(t))\geq 0,\forall t\geq 0$.

\begin{theorem}\label{theorem:cbf}
Consider the proxy subsystem given in \eqref{proxy} and a joint space safe set $\C_q$ defined in \eqref{setcjoint}. Suppose that $h(q(0))> 0$, $\|e_q(t)\|<L$ for any $t\geq 0$, and there exist positive constants $\la,\ga,\beta>0$ such that \\
(i) $\frac{\pa h}{\pa q}(q(0))\mu(0)-\frac{1}{2\beta}\left\|\frac{\pa h}{\pa q}(q(0))\right\|^2-\frac{\beta(D_1+L)^2}{2}+\la h(q(0))\geq 0$;\\
(ii) the set $K_{BF}^q(q,\mu)=\{{\mathfrak u}  \in\R^n: \Psi_0+\Psi_1 {\mathfrak u}\geq 0\}$ is not empty for any $q\in\C_q$ and $\mu\in\R^n$, where
\begin{IEEEeqnarray}{rCl}
       \IEEEyesnumber \label{psi}
       \IEEEyessubnumber\label{psi0}
       \Psi_0&=& \mathcal{M}\mu- \left\|\mathcal{M}\right\|(D_1+L)+\ga\bar h,\\
       \IEEEyessubnumber\label{psi1}
       \Psi_1&=& \frac{\pa h}{\pa q},
\end{IEEEeqnarray}
with $\mathcal{M}=\mu^\top \rh-\frac{1}{\beta}\frac{\pa h}{\pa q}\rh+\la
       \frac{\pa h}{\pa q}$,
$\rh=\frac{\pa^2 h}{\pa q^2}$ denotes the Hessian, and   
       $\bar h = \frac{\pa h}{\pa q}\mu-\frac{1}{2\beta}\left\|\frac{\pa h}{\pa q}\right\|^2-\frac{\beta(D_1+L)^2}{2}+\la h$.\\
Then, any Lipschitz continuous control input $\nu\in K_{BF}^q(q,\mu)$ will make $h(q(t))\geq 0$ for any $t\geq 0$.
\end{theorem}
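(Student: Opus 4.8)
The plan is to treat $\bar h$ as a robust high-order CBF that bridges the relative-degree-two gap between the control $\nu$ (which enters only through $\dot\mu$) and $h$ (which depends only on $q$), while absorbing the mismatched disturbance $e_q+\xi$ via Young's inequality. First I would record that, by \eqref{exi}, the lumped disturbance $w\triangleq e_q+\xi$ satisfies $\|w\|<D_1+L$, and rewrite the proxy subsystem compactly as $\dot q=\mu+w$, $\dot\mu=\nu$.

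The first step establishes that $\bar h$ lower-bounds $\dot h+\la h$. Since $\dot h=\frac{\pa h}{\pa q}(\mu+w)$ and $\frac{\pa h}{\pa q}w\geq -\|\frac{\pa h}{\pa q}\|\,\|w\|\geq -\frac{1}{2\beta}\|\frac{\pa h}{\pa q}\|^2-\frac{\beta(D_1+L)^2}{2}$ by Young's inequality together with the disturbance bound, I obtain $\dot h+\la h\geq \bar h$, i.e. $\dot h\geq \bar h-\la h$. The second step differentiates $\bar h$ along the dynamics: using that the Hessian $\rh$ is symmetric, so that $w^\top\rh\mu=\mu^\top\rh w$, the $q$- and $\mu$-derivatives recombine into $\dot{\bar h}=\mathcal{M}(\mu+w)+\frac{\pa h}{\pa q}\nu$, with $\mathcal{M}$ exactly the row vector in the statement. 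Bounding the uncontrollable part by $\mathcal{M}w\geq -\|\mathcal{M}\|(D_1+L)$ then yields $\dot{\bar h}+\ga\bar h\geq \Psi_0+\Psi_1\nu$.

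The conclusion follows from two nested comparison-lemma arguments. Because $\nu\in K_{BF}^q(q,\mu)$ enforces $\Psi_0+\Psi_1\nu\geq 0$, the second step gives $\dot{\bar h}\geq -\ga\bar h$; combined with $\bar h(q(0),\mu(0))\geq 0$ from condition (i), the Comparison Lemma \cite[Lemma 2.5]{khalil2002nonlinear} yields $\bar h(t)\geq 0$ for all $t\geq 0$. Feeding this back into the first step gives $\dot h\geq -\la h$, and since $h(q(0))>0$, a second application of the Comparison Lemma produces $h(q(t))\geq h(q(0))e^{-\la t}>0$, which is the claimed safety. To handle the apparent circularity — condition (ii) only guarantees a valid Lipschitz $\nu$ while $q\in\C_q$ — I would run the argument on the maximal interval $[0,T)$ on which $q(t)\in\C_q$: there $K_{BF}^q$ is nonempty so $\nu$ is well defined and both estimates hold, forcing $h(q(t))>0$; continuity then gives $h(q(T))>0$, contradicting $T<\infty$, so $T=\infty$.

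The step I expect to be the main obstacle is the second one, namely organizing the time-derivative of $\bar h$ so that the Hessian terms arising from differentiating $\frac{\pa h}{\pa q}$ recombine cleanly into $\mathcal{M}(\mu+w)$; the symmetry of $\rh$ is precisely what cancels the cross term $w^\top\rh\mu-\mu^\top\rh w$ and lets the mismatched disturbance $w$ be isolated into a single norm bound. Everything else reduces to Young's inequality and the comparison principle.
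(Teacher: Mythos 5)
Your proposal is correct and follows essentially the same route as the paper's proof: show $\dot{\bar h}=\frac{\pa h}{\pa q}\nu+\mathcal{M}(\mu+e_q+\xi)\geq -\ga\bar h$ so that $\bar h(t)\geq 0$ by condition (i) and the comparison principle, then use Young's inequality and the bound $\|e_q+\xi\|<D_1+L$ to get $\dot h+\la h\geq\bar h\geq 0$ and hence $h(q(t))\geq 0$. Your explicit use of the Hessian's symmetry to recombine the cross terms into $\mathcal{M}(\mu+w)$, and the maximal-interval argument closing the loop on where $K_{BF}^q$ is guaranteed nonempty, are details the paper leaves implicit but do not change the argument.
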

\begin{proof}
First, we show that $\nu\in K_{BF}^q(q,\mu)\implies\bar h(t)\geq 0$ for any $t\geq 0$. Note that Condition (i) indicates that $\bar h(q(0),\mu(0))\geq 0$. Meanwhile, one can observe that $\dot{\bar{h}}$ can be expressed as
\begin{IEEEeqnarray*}{rCl}
       \dot{\bar{h}}&=& \frac{\pa h}{\pa q}\nu+\left(\mu^\top \rh-\frac{1}{\beta}\frac{\pa h}{\pa q}{\rm H}_h+\la
       \frac{\pa h}{\pa q}\right)(\mu+e_q+\xi)\nonumber\\
       &=&\frac{\pa h}{\pa q}
       \nu+ \mathcal{M}\mu+\mathcal{M}(e+\xi)\nonumber\\
       &\stackrel{\eqref{exi}}{\geq}& \Psi_1 \nu + \mathcal{M}\mu- \left\|\mathcal{M}\right\|(D_1+L).
\end{IEEEeqnarray*}
Selecting $\nu\in K_{BF}^q(q,\mu)$ yields $ \dot{\bar{h}}\geq -\Psi_0+\mathcal{M}\mu- \left\|\mathcal{M}\right\|(D_1+L)=-\ga\bar h,$ 
which indicates $\bar h(q(t),\mu(t))\geq 0,\forall t\geq 0$ because  $\bar h(q(0),\mu(0))\geq 0$. 
Since
\begin{IEEEeqnarray*}{rCl}
       \dot h+\la h&=& \frac{\pa h}{\pa q}(\mu+e_q+\xi)+\la h\nonumber\\
       &\geq& \frac{\pa h}{\pa q}\mu-\frac{1}{2\beta}\left\|\frac{\pa h}{\pa q}\right\|^2-\frac{\beta}{2}\|e_q+\xi\|^2+\la h\nonumber\\
       &\stackrel{\eqref{exi}}{\geq}&\frac{\pa h}{\pa q}\mu-\frac{1}{2\beta}\left\|\frac{\pa h}{\pa q}\right\|^2-\frac{\beta(D_1+L)^2}{2}+\la h\nonumber\\
       &=&\bar h(q,\mu)\geq 0,
\end{IEEEeqnarray*}
one can conclude that 
$h(q(t))\geq 0,\forall t$ since $h(q(0))\geq 0$. 
\end{proof}

The safe virtual controller proposed in Theorem \ref{theorem:cbf} is obtained by solving the following CBF-QP:%
\begin{align}
\min_{\nu} \quad & \|\nu-\nu_{d}\|^2\label{cbfQP1}\\
\textrm{s.t.} \quad & \Psi_0+\Psi_1 \nu\geq 0, \nonumber
\end{align}
where $\Psi_0,\Psi_1$ are given in \eqref{psi} and $\nu_{d}$ is any given nominal control law.

The safe feedback control law $\tau(q(t),\hat \omega(t),t)$ to the EL system \eqref{elsys:total} consists of the control law $\tau$ given in \eqref{blfu} and the control law $\nu$ given in \eqref{cbfQP1}. By Theorems \ref{theorem:blf} and \ref{theorem:cbf}, the safe controller will ensure that the closed-loop system is always safe with respect to $\C_q$, i.e., $h(q(t))\geq 0$ for all $t\geq 0$.

\begin{remark}
The nominal control law $\nu_d$ can be designed as $\nu_d=-\alpha_1E_{q}-\alpha_2E_{\mu}+\ddot q_d$, where $E_{q}\triangleq q-q_{d}$, $E_{\mu}\triangleq \mu-\dot q_{d}$, $q_{d}$ denotes the reference trajectory, and
$\alpha_1,\alpha_2\in\R$ are selected such that
\begin{equation*}
    \sigma_{\rm min}\left(\begin{bmatrix}
    0_{n\times n}&I_{n\times n}\\ -\alpha_1I_{n\times n}&-\alpha_2I_{n\times n}
    \end{bmatrix}\right)\triangleq -\alpha<-\fot.
\end{equation*}
Define a Lyapunov candidate function as $V=\fot \varepsilon^\top \varepsilon$, where $\varepsilon=[E_q^\top\ E_\mu^\top]^\top$. Since $\dot V$ satisfies
\begin{IEEEeqnarray*}{rCl}
    \dot V
    &=&\varepsilon^\top
    \begin{bmatrix}
    0_{n\times n}&I_{n\times n}\\ -\alpha_1I_{n\times n}&-\alpha_2I_{n\times n}
    \end{bmatrix}\varepsilon+E_{q}^\top (e_q+\xi)\nonumber\\
    &\leq&-\left(2\alpha-1\right)V+\frac{(D_1+L)^2}{2},
\end{IEEEeqnarray*}
the tracking error is uniformly ultimately bounded \cite{khalil2002nonlinear}.
\end{remark}

\begin{remark}
Suppose that $q^*$ is the unique zero of $\Psi_1$ in $\C_q$. We claim that if 
\begin{equation}\label{cbfconditionspecial}
    \rh^*\triangleq\rh(q^*)\succ aI_{n\times n},\
    h^*\triangleq h(q^*)>0,
\end{equation}
where $a$ is an arbitrary positive constant, then one can always find $\ga,\beta,\la>0$ such that Condition (i) and (ii) in Theorem \ref{theorem:cbf} hold true. Indeed, one can easily select $\beta$ and $\la$ such that Condition (i) is fulfilled. Meanwhile,
from \eqref{psi} one can observe that $\Psi_0^*\triangleq\Psi_0(q^*,\mu)=\mu^\top \rh^* \mu-\|\mu^\top \rh^*\|(D_1+L)+\ga\la h-\frac{\beta(D_1+L)^2}{2}$ satisfies
\begin{IEEEeqnarray*}{rCl}
    \Psi_0^*&\geq& a\|\mu\|^2\!-\!\|\rh^*\|(D_1\!+\!L)\|\mu\|\!+\!\ga\la h^*\!-\!\frac{\beta(D_1\!+\!L)^2}{2}\nonumber\\
    &=&a\left(\|\mu\|-\frac{\|\rh^*\|(D_1+L)}{2a}\right)^2+\ga\la h^*-\Xi\nonumber\\
    &\geq& \ga\la h^*-\Xi,
\end{IEEEeqnarray*}
where $\Xi=\frac{\|\rh^*\|^2(D_1+L)^2}{4a}+\frac{\beta(D_1+L)^2}{2}$. It is obvious that selecting $\ga\geq\frac{\Xi}{\la h^*}$ will yield $\Psi_0\geq 0$, such that $K_{BF}^q$ is not empty when $q=q^*$, which
shows the correctness of the claim. Furthermore, it is obvious that if $\Psi_1$ has finite zeros in $\C_q$ and each zero satisfies \eqref{cbfconditionspecial}, then one can always select appropriate $\ga,\la,\beta$ such that Conditions (i) and (ii) of Theorem \ref{theorem:cbf} are satisfied. Nevertheless, it should be noticed that \eqref{cbfconditionspecial} is not the unique criterion for verifying the conditions in Theorem \ref{theorem:cbf}. Developing systematic methods to design $h$ satisfying these conditions will be our future work.
\end{remark}

\begin{remark}\label{remark:BLF}
The bound $L$ for $\|e_q\|$ as given in \eqref{boundL}  should be carefully selected to achieve a trade-off between the control performance and the maximum magnitude of the control input.  If $L$ is selected to be very small, the control input tends to be significant because the state is more likely to approach the boundary of the output constraint; if $L$ is chosen to be large, unnecessary conservatism (i.e., the system only operates in a subset of the original safety set) may be introduced because in Theorem \ref{theorem:cbf} the worst-case of $e_q+\xi$ is considered.

If the proxy subsystem is not augmented with an additional integrator, the requirement of $L$ would be more restrictive, i.e., $L\geq |\hat\omega(0)-\mu(q(0))|$ is required. 
In practice, this may necessitate the selection of a larger $L$, which could result in unnecessarily conservatism. Meanwhile, $L$ is used in the design of $\mu$ to guarantee safety, which implies that $\mu(q(0))$ implicitly relies on $L$. Thus, in some cases, it may be difficult to find an appropriate $L$ that satisfies $L\geq |\hat\omega(0)-\mu(q(0))|$.
\end{remark}

\section{task space Safe Control}
\label{sec:task}

In this section, we will utilize the idea presented in the preceding section to solve the task space safe control problem for the EL system with limited information, external disturbances, and measurement uncertainties. The proxy subsystem in task space is more complicated to control than that in joint space; therefore, a different CBF-based control scheme is proposed.

Invoking \eqref{forwardkine}, one can see
\begin{equation}\label{jacobian}
    \dot p = J(q)\omega = J(q)(\hat \omega+\xi),
\end{equation}
where $J=\frac{\pa f}{\pa q}$ denotes the Jacobian \cite{spong2006robot}. Substituting \eqref{jacobian} into \eqref{elsys:total} yields
\begin{IEEEeqnarray}{rCl}
\IEEEyesnumber\label{elsysp:total}
\IEEEyessubnumber \label{elsysp1}
       \dot p &=& J(q)(\hat \omega+\xi),\\
\IEEEyessubnumber\label{elsysp2}
\dot{\hat{\omega}}&=& M^{-1}(q) (\tau-C(q,\omega)\omega-G(q)+\tau_d)-\dot\xi.
\end{IEEEeqnarray}
System \eqref{elsysp:total} can be decomposed into the proxy subsystem and the virtual tracking subsystem similar to Section \ref{sec:joint}. The proxy subsystem is given as:
\begin{IEEEeqnarray}{rCl}
\IEEEyesnumber\label{proxy systemp}
\IEEEyessubnumber\label{proxy systemp:1}
       \dot p&=& J(q)\eta + J(q)(e_p+\xi),\\
\IEEEyessubnumber\label{proxy systemp:2}
        \dot\eta&=& \upsilon,
\end{IEEEeqnarray}
where $\eta$ is the virtual state with $\eta(0)=\hat \omega(0)$, $e_p\triangleq\hat \omega-\eta$, and $\upsilon$ denotes the virtual control input to be designed. The virtual tracking subsystem is given as:
\begin{equation}\label{vtsp}
    \dot e_p=M^{-1} (\tau-C(q,\omega)\omega -G(q)+\tau_d) - \dot\xi-\upsilon.
\end{equation}
Note that system \eqref{vtsp} corresponds to system \eqref{vts}, for which the adaptive BLF-based controller developed in Theorem \ref{theorem:blf} is still applicable.
On the other hand, the CBF-based controller presented in Theorem \ref{theorem:cbf} is inapplicable to the proxy subsystem given in \eqref{proxy systemp} because \eqref{proxy systemp} is different from \eqref{proxy}. 
We will design a new CBF-based safe control law for \eqref{proxy systemp} to ensure the forward invariance of $\C_p$. To that end, we first design 
a nominal tracking controller for the proxy subsystem \eqref{proxy systemp} based on backstepping \cite{kokotovic1992joy} as shown in  the following proposition.
\begin{proposition}\label{theorem:tasktracking}
Consider the proxy subsystem \eqref{proxy systemp} and a desired trajectory $p_d$. Suppose that $\|e_p\|< L$ and the Jacobian $J$ has full row rank, i.e., there exists $J^\dagger$ such that $J J^\dagger=I_{k\times k}$. If the desired control input $\upsilon_d$ is designed as
\begin{IEEEeqnarray}{rCl}
\IEEEyesnumber\label{vdtracking}
\IEEEyessubnumber\label{eta}
\delta&=&J^\dagger\left(-l_1\ep_d+\dot p_d-\frac{\|J\|^2}{2}\ep_d\right) ,\\
\IEEEyessubnumber\label{nud}
    \upsilon_d&=& -l_2 \ep_\eta\!+\!\frac{\pa\delta}{\pa p}J\eta\!+\!\frac{\pa\delta}{\pa t}\!-\!\fot\left\|\frac{\pa \delta}{\pa p}J\right\|^2\ep_\eta\!-\!J^\top \ep_d,
\end{IEEEeqnarray}
where $\ep_d=p-p_d$, $\ep_\eta=\eta-\delta$, and $l_1,l_2>0$ are arbitrary positive constants,
then the tracking error $\ep_d$ is uniformly ultimately bounded.
\end{proposition}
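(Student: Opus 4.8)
The plan is to prove uniform ultimate boundedness via a standard two-step backstepping argument, treating $\eta$ as a virtual control for the first channel of \eqref{proxy systemp} and then designing $\upsilon$ for the second channel. First I would introduce the two error coordinates $\ep_d = p - p_d$ and $\ep_\eta = \eta - \delta$, where $\delta$ in \eqref{eta} serves as the stabilizing virtual control for the $p$-dynamics. Writing $\eta = \ep_\eta + \delta$ and substituting the definition of $\delta$ (using $JJ^\dagger = I$) into $\dot{\ep}_d = J\eta + J(e_p+\xi) - \dot p_d$, I expect the known terms to cancel, leaving
\[
\dot{\ep}_d = -l_1\ep_d - \tfrac{\|J\|^2}{2}\ep_d + J\ep_\eta + J(e_p+\xi).
\]

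Next I would compute the $\ep_\eta$-dynamics. Since $\delta$ depends on the state through $\ep_d = p - p_d$ and explicitly on time through $p_d$, its total derivative is $\dot\delta = \frac{\pa\delta}{\pa p}\dot p + \frac{\pa\delta}{\pa t}$, and substituting $\dot p = J\eta + J(e_p+\xi)$ gives $\dot{\ep}_\eta = \upsilon - \frac{\pa\delta}{\pa p}J\eta - \frac{\pa\delta}{\pa p}J(e_p+\xi) - \frac{\pa\delta}{\pa t}$. Plugging in $\upsilon = \upsilon_d$ from \eqref{nud}, the feedforward terms $\frac{\pa\delta}{\pa p}J\eta$ and $\frac{\pa\delta}{\pa t}$ cancel exactly, yielding
\[
\dot{\ep}_\eta = -l_2\ep_\eta - \tfrac{1}{2}\big\|\tfrac{\pa\delta}{\pa p}J\big\|^2\ep_\eta - J^\top\ep_d - \tfrac{\pa\delta}{\pa p}J(e_p+\xi).
\]

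With both error dynamics in hand, I would take the composite Lyapunov candidate $V = \fot\|\ep_d\|^2 + \fot\|\ep_\eta\|^2$ and differentiate. The coupling term $\ep_d^\top J\ep_\eta$ arising from $\dot{\ep}_d$ is canceled by the $-J^\top\ep_d$ contribution in $\dot{\ep}_\eta$ (since $\ep_\eta^\top J^\top \ep_d = \ep_d^\top J\ep_\eta$ as scalars), which is precisely why $-J^\top\ep_d$ is built into \eqref{nud}. The two remaining disturbance terms are handled by Young's inequality, $\ep_d^\top J(e_p+\xi) \le \tfrac{\|J\|^2}{2}\|\ep_d\|^2 + \tfrac{1}{2}\|e_p+\xi\|^2$ and analogously for the $\frac{\pa\delta}{\pa p}J(e_p+\xi)$ term; the quadratic-in-state halves of these bounds are exactly absorbed by the damping terms $-\tfrac{\|J\|^2}{2}\ep_d$ and $-\tfrac{1}{2}\|\frac{\pa\delta}{\pa p}J\|^2\ep_\eta$ that were deliberately inserted into \eqref{eta} and \eqref{nud}. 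This should leave $\dot V \le -l_1\|\ep_d\|^2 - l_2\|\ep_\eta\|^2 + \|e_p+\xi\|^2$.

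Finally, I would invoke the bound $\|e_p+\xi\| < D_1 + L$ (the task-space analogue of \eqref{exi}, valid since $\|e_p\|<L$ by hypothesis and $\|\xi\|\le D_1$ by Assumption \ref{assumptionnoise}), which gives $\dot V \le -2\min\{l_1,l_2\}\,V + (D_1+L)^2$, and standard UUB results then yield the claim. The main obstacle I anticipate is the bookkeeping in the $\dot{\ep}_\eta$ computation: one must correctly assemble the total derivative $\dot\delta$ accounting for both the implicit $p$-dependence and the explicit time dependence, and then verify that the two engineered damping terms dominate the mismatched-disturbance cross terms produced by $J(e_p+\xi)$ after completing the squares. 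This matching of the $\tfrac{\|J\|^2}{2}$ and $\tfrac{1}{2}\|\frac{\pa\delta}{\pa p}J\|^2$ coefficients is the crux of why the specific form of \eqref{eta}--\eqref{nud} closes the argument.
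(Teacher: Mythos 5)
Your proposal is correct and follows essentially the same route as the paper: the same backstepping decomposition with virtual control $\delta$, the same composite Lyapunov function $V=\fot\|\ep_d\|^2+\fot\|\ep_\eta\|^2$, the same use of Young's inequality to absorb the $J(e_p+\xi)$ cross terms into the engineered damping terms, and the same final bound $\dot V\leq -l_1\|\ep_d\|^2-l_2\|\ep_\eta\|^2+(D_1+L)^2$ yielding uniform ultimate boundedness. You also correctly identify the cancellation of the coupling term $\ep_d^\top J\ep_\eta$ by the $-J^\top\ep_d$ term in \eqref{nud}, which is exactly the mechanism the paper relies on.
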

\begin{proof}
Define a Lyapunov candidate function as $V_1=\fot \ep_d^\top \ep_d$. The derivative of $V_1$ satisfies
\begin{IEEEeqnarray*}{rCl}
\dot V_1&=& \ep_d^\top (J\delta+J\ep_\eta+J(e_p+\xi)-\dot p_d)\nonumber\\
&\leq& \ep_d^\top (J\delta+J\ep_\eta -\dot p_d)+\frac{\|\ep_d\|^2\|J\|^2}{2}+\frac{(D_1+L)^2}{2}\nonumber\\
&\stackrel{\eqref{eta}}{\leq}&-l_1\|\ep_d\|^2+\ep_d^\top J\ep_\eta+\frac{(D_1+L)^2}{2}.
\end{IEEEeqnarray*}
Then, an augmented Lyapunov candidate function is designed as $V_2=V_1+\fot \ep_\eta^\top \ep_\eta$, whose derivative can be expressed as
\begin{IEEEeqnarray*}{rCl}
\dot V_2&\leq &-l_1\|\ep_d\|^2+\ep_\eta^\top J^\top \ep_d+\frac{(D_1+L)^2}{2}\nonumber\\
&&+ \ep_\eta^\top\left(\upsilon_d-\frac{\pa \delta}{\pa p}J(\eta+e_p+\xi)-\frac{\pa \delta}{\pa t}\right)\nonumber\\
&\leq&-l_1\|\ep_d\|^2+\ep_\eta^\top J^\top \ep_d+(D_1+L)^2\nonumber\\
&&+ \ep_\eta^\top\left(\upsilon_d-\frac{\pa \delta}{\pa p}J\eta-\frac{\pa \delta}{\pa t}\right)+\fot\|\ep_\eta\|^2\left\|\frac{\pa\delta}{\pa p}J\right\|^2\nonumber\\
&\stackrel{\eqref{nud}}{\leq}&-l_1\|\ep_d\|^2-l_2\|\ep_\eta\|^2+(D_1+L)^2.
\end{IEEEeqnarray*}
Therefore, the tracking error $\ep_d$ is uniformly ultimately bounded \cite{khalil2002nonlinear}.
\end{proof}

A CBF-based safe control law is proposed for the proxy subsystem \eqref{proxy systemp} in the following theorem. 
\begin{theorem}\label{theorem:task}
Consider the proxy subsystem \eqref{proxy systemp} and the set $\C_p$ defined in \eqref{setctask}. Suppose that $h(p(0))\geq 0$, $\|e_p(t)\|< L,\forall t\geq 0$, and there exist constants $\la,\ga,\beta>0$ such that \\
(i) $\frac{\pa h}{\pa p}(p(0))J(q(0))\eta(0)-\frac{1}{2\beta}\left\|\frac{\pa h}{\pa p}(p(0))J(q(0))\right\|^2-\frac{\beta(D_1+L)^2}{2}+\la h(p(0))\geq 0$;\\
(ii) the set $K_{BF}^p(p,q,\mu)=\{{\mathfrak u}  \in\R^n: \Psi_0+\Psi_1 {\mathfrak u}\geq 0\}$ is not empty for any $p\in\C_p$ and $\eta\in\R^n$, where
\begin{IEEEeqnarray}{rCl}
       \IEEEyesnumber \label{psitask}
       \Phi_0&=& \frac{\pa\bar h}{\pa q}(J\eta+\hat \omega)-\left\|\frac{\pa\bar h}{\pa q}J\right\|(D_1+L)\nonumber\\
       \IEEEyessubnumber
       &&-\left\|\frac{\pa\bar h}{\pa q}\right\|D_1+\ga\bar h,\label{psitask0}\\
       \IEEEyessubnumber\label{psitask1}
       \Phi_1&=& \frac{\pa h}{\pa p}J,
\end{IEEEeqnarray}
with $\bar h=\frac{\pa h}{\pa p}J\eta-\frac{1}{2\beta}\left\|\frac{\pa h}{\pa p}J\right\|^2-\frac{\beta(D_1+L)^2}{2}+\la h$.\\
Then, any Lipschitz continuous control input $\upsilon\in K_{BF}^p$ will make $h(p(t))\geq 0$ for any $t\geq 0$.
\end{theorem}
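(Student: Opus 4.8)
The plan is to mirror the two-step argument used in the proof of Theorem~\ref{theorem:cbf}, adapting it to the task-space proxy subsystem \eqref{proxy systemp}. First I would establish the auxiliary invariance property $\bar h(t)\geq 0$ for all $t\geq 0$, and then deduce the desired conclusion $h(p(t))\geq 0$ from it. Throughout I use the hypothesis $\|e_p(t)\|<L$ together with Assumption~\ref{assumptionnoise} to get the bound $\|e_p+\xi\|<D_1+L$, which replaces \eqref{exi}, and I use $\|\xi\|\leq D_1$ for the measurement-error channel.

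For the first step, Condition~(i) is exactly the statement that $\bar h\geq 0$ at $t=0$. I would then differentiate $\bar h$ along \eqref{proxy systemp}. Since $\bar h$ depends on $\eta$ explicitly, on $p$ through $\frac{\pa h}{\pa p}$ and $h$, and on $q$ through the Jacobian $J=J(q)$, the chain rule gives $\dot{\bar h}=\frac{\pa\bar h}{\pa\eta}\dot\eta+\frac{\pa\bar h}{\pa p}\dot p+\frac{\pa\bar h}{\pa q}\dot q$. The control enters only through $\frac{\pa\bar h}{\pa\eta}\dot\eta=\frac{\pa h}{\pa p}J\upsilon=\Phi_1\upsilon$, which recovers the coefficient $\Phi_1$ in \eqref{psitask1}. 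Substituting $\dot p=J\eta+J(e_p+\xi)$ and $\dot q=\hat\omega+\xi$ separates the remaining drift into a known part and two disturbance channels: one driven by $e_p+\xi$ entering through $\dot p$, the other by $\xi$ entering through $\dot q$. Bounding these by $\|e_p+\xi\|<D_1+L$ and $\|\xi\|\leq D_1$ respectively, the drift is lower-bounded by the expression $\Phi_0-\ga\bar h$, where $\Phi_0$ in \eqref{psitask0} is exactly the assembly of the known drift plus $\ga\bar h$ minus the two worst-case disturbance magnitudes. Hence $\dot{\bar h}\geq \Phi_1\upsilon+\Phi_0-\ga\bar h$, and choosing $\upsilon\in K_{BF}^p$ forces $\Phi_0+\Phi_1\upsilon\geq 0$, so that $\dot{\bar h}\geq-\ga\bar h$. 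Since $\bar h(0)\geq 0$, this differential inequality yields $\bar h(t)\geq 0$ for all $t\geq 0$.

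For the second step I would show that $\bar h\geq 0$ forces $h\geq 0$, just as in Theorem~\ref{theorem:cbf}. Computing $\dot h=\frac{\pa h}{\pa p}\dot p=\frac{\pa h}{\pa p}J\eta+\frac{\pa h}{\pa p}J(e_p+\xi)$ and applying Young's inequality to the disturbance term, $\frac{\pa h}{\pa p}J(e_p+\xi)\geq-\frac{1}{2\beta}\|\frac{\pa h}{\pa p}J\|^2-\frac{\beta}{2}\|e_p+\xi\|^2\geq-\frac{1}{2\beta}\|\frac{\pa h}{\pa p}J\|^2-\frac{\beta(D_1+L)^2}{2}$, I obtain $\dot h+\la h\geq\bar h\geq 0$. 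Since $h(p(0))\geq 0$, the inequality $\dot h\geq-\la h$ then gives $h(p(t))\geq 0$ for all $t\geq 0$, completing the argument.

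The main obstacle is the derivative computation in the first step. Unlike the joint-space case of Theorem~\ref{theorem:cbf}, where the single mismatched disturbance $e_q+\xi$ enters through one channel, here $\bar h$ depends on $q$ through the configuration-dependent Jacobian, so the measurement error $\xi$ appears simultaneously in the $\dot p$ channel (bundled with $e_p$ and scaled by the $p$-gradient of $\bar h$) and in the $\dot q$ channel (scaled by the $q$-gradient of $\bar h$). Keeping these channels separate, bounding each by its own worst case ($D_1+L$ and $D_1$, respectively), and assembling the residual so that it collapses exactly to $\Phi_0-\ga\bar h$ is the delicate bookkeeping step; the definition of $\Phi_0$ in \eqref{psitask0} is engineered precisely so that this cancellation occurs and the CBF-type inequality $\dot{\bar h}\geq-\ga\bar h$ is recovered.
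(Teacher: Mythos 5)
Your proposal is correct and follows exactly the route the paper intends: the paper only sketches this proof (deferring to the proof of Theorem~\ref{theorem:cbf}), and your two-step argument --- first $\dot{\bar h}\geq\Phi_1\upsilon+\Phi_0-\ga\bar h$ so that $\upsilon\in K_{BF}^p$ gives $\dot{\bar h}\geq-\ga\bar h$ and hence $\bar h(t)\geq 0$, then Young's inequality giving $\dot h+\la h\geq\bar h\geq 0$ --- is precisely the intended filling-in of that sketch. Your careful separation of the $\dot p$- and $\dot q$-disturbance channels (bounded by $D_1+L$ and $D_1$ respectively) also implicitly resolves what appears to be a typo in \eqref{psitask0}: the first two occurrences of $\frac{\pa\bar h}{\pa q}$ there should read $\frac{\pa\bar h}{\pa p}$ (otherwise the products $\frac{\pa\bar h}{\pa q}J\eta$ and $\frac{\pa\bar h}{\pa q}J$ are dimensionally inconsistent when $k\neq n$), and that corrected reading is exactly the expression your lower bound establishes.
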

\begin{proof}
We only show the sketch of the proof due to space limitation and the similarity of the proof to that of Theorem \ref{theorem:cbf}. One can see that selecting $\upsilon\in K_{BF}^p$ ensures $\dot{\bar{h}}\geq -\ga\bar h$; therefore,  $\bar h(t)\geq 0$ for any $t\geq 0$ since Condition (i) implies $\bar h(p(0),q(0),\eta(0))\geq 0$. Then, it can be proved that $\bar h(t)\geq 0\implies h(t)\geq 0$ for any $t\geq 0$. 
\end{proof}

Based on Proposition \ref{theorem:tasktracking} and Theorem \ref{theorem:task}, 
the safe virtual controller $\nu$ can be obtained by solving a CBF-QP:%
\begin{align}
\min_{\nu} \quad & \|\upsilon-\upsilon_{d}\|^2\label{cbfQP2}\\
\textrm{s.t.} \quad & \Phi_0+\Phi_1 \upsilon\geq 0, \nonumber
\end{align}
where $\Phi_0,\Phi_1$ are given in \eqref{psitask} and $\upsilon_{d}$ is presented in \eqref{vdtracking}.

The safe feedback control law $\tau(p(t), q(t),\hat \omega(t),t)$ to the EL system \eqref{elsys:total} consists of the control law $\tau$ given in \eqref{blfu} and the control law $\nu$ given in \eqref{cbfQP2}. By Theorems \ref{theorem:blf} and \ref{theorem:task}, the control law $\tau(p(t), q(t),\hat \omega(t),t)$ will ensure the safety of the closed-loop system with respect to $\C_p$, i.e., $h(p(t))\geq 0$ for all $t\geq 0$.

\section{Simulation}
\label{sec:simulation}
In this section, numerical simulation results are presented to demonstrate the effectiveness of the proposed method. Consider a two-linked robot manipulator, whose dynamics can be described by \eqref{elsys:total} with 
\begin{equation}
M(q)\!=\!
\begin{bmatrix}
\frac{m_1l^2}{3}\!+\!\frac{4m_2l^2}{3}\!+\!m_2 l^2\cos q_2& \frac{m_2l^2}{3}\!+\!\frac{m_2l^2}{2}\cos q_2 \\
\frac{m_2l^2}{3}\!+\!\frac{m_2l^2}{2}\cos q_2 & \frac{m_2 l^2}{3}
\end{bmatrix},\nonumber
\end{equation}
\begin{equation}
C(q, \omega)\!=\!
\begin{bmatrix}
-\frac{m_2l^2}{2}\dot q_2\sin q_2 &-\frac{m_2l^2}{2}(\dot q_1+\dot q_2)\sin q_2\\
\frac{m_2l^2}{2}\dot q_1\sin q_2& 0
\end{bmatrix},\nonumber
\end{equation}
\begin{equation}
G(q)=
\begin{bmatrix}
\frac{m_1gl}{2}\cos q_1\!+\!\frac{m_2gl}{2}\cos(q_1\!+\!q_2)\!+\!m_2gl\cos q_1\\
\frac{m_2gl}{2}\cos(q_1\!+\!q_2)
\end{bmatrix},\nonumber
\end{equation}
where $m_1=m_2=1\ kg$, $l=1\ m$, $q=[q_1\ q_2]\in\R^2$ denotes the joint angles, and $\omega=[ \dot q_1 \ \dot q_2]\in\R^2$ are joint angular velocities \cite{sun2011neural}. 
We emphasize that $M$, $C$, and $G$ are assumed to be unknown, and only $\la_2=5$ in Property \ref{propertymass} and $D_1=0.2$ in Assumption \ref{assumptionnoise} are available in our control design. 

\subsection{Joint Space Safe Control}
\label{sec:simj}
In this subsection, simulation results of the joint space safe control are presented. The reference trajectories are $q_{1d}=q_{2d}=3\sin (t)$; four CBFs are selected as $h_1=2.5-q_1$, $h_2=q_1+2.5$, $h_3=2-q_2$, and $h_4=q_2+1$, which aim to ensure $-2.5\leq q_1\leq 2.5$ and $-1\leq q_2\leq 2$; the control parameters are selected as $\beta=2$, $\ga=10$, $\la=16$, $\ep=\ep_1=\ep_2=0.01$, $L=0.3$, $\ga_\ta=1$, and $k_1=0.1$; the initial conditions are $q_1(0)=q_2(0)=1$ and $\dot q_1(0)=\dot q_2(0)=0$; the measurement uncertainty and disturbance are selected as $\xi=[0.2\sin(2t)\ 0.2\sin(2t)]^\top$ and $\tau_d=10\sin(t)$, from which one can see that Assumption \ref{assumptiontaud} and \ref{assumptionnoise} are satisfied. It is easy to check  that Conditions (i) and (ii) of Theorem \ref{theorem:cbf} are fulfilled with the given parameters and CBFs. The simulation results are presented in Fig. \ref{fig:joint}.

From the simulation results one can see that the safety of $\C_q$ is guaranteed as the trajectories of $q_1$ and $q_2$ always stay inside the safe region whose boundaries are represented by the dashed red line, and the reference trajectory is well-tracked within the safe set. Moreover, from Fig. \ref{fig:joint}(c) one can observe that $\|e_q(t)\|<L$ is satisfied for any $t\geq 0$, which indicates that the adaptive BLF-based controller proposed in Theorem \ref{theorem:blf} is effective.

\begin{figure}
\centering
\includegraphics[width=0.825\linewidth]{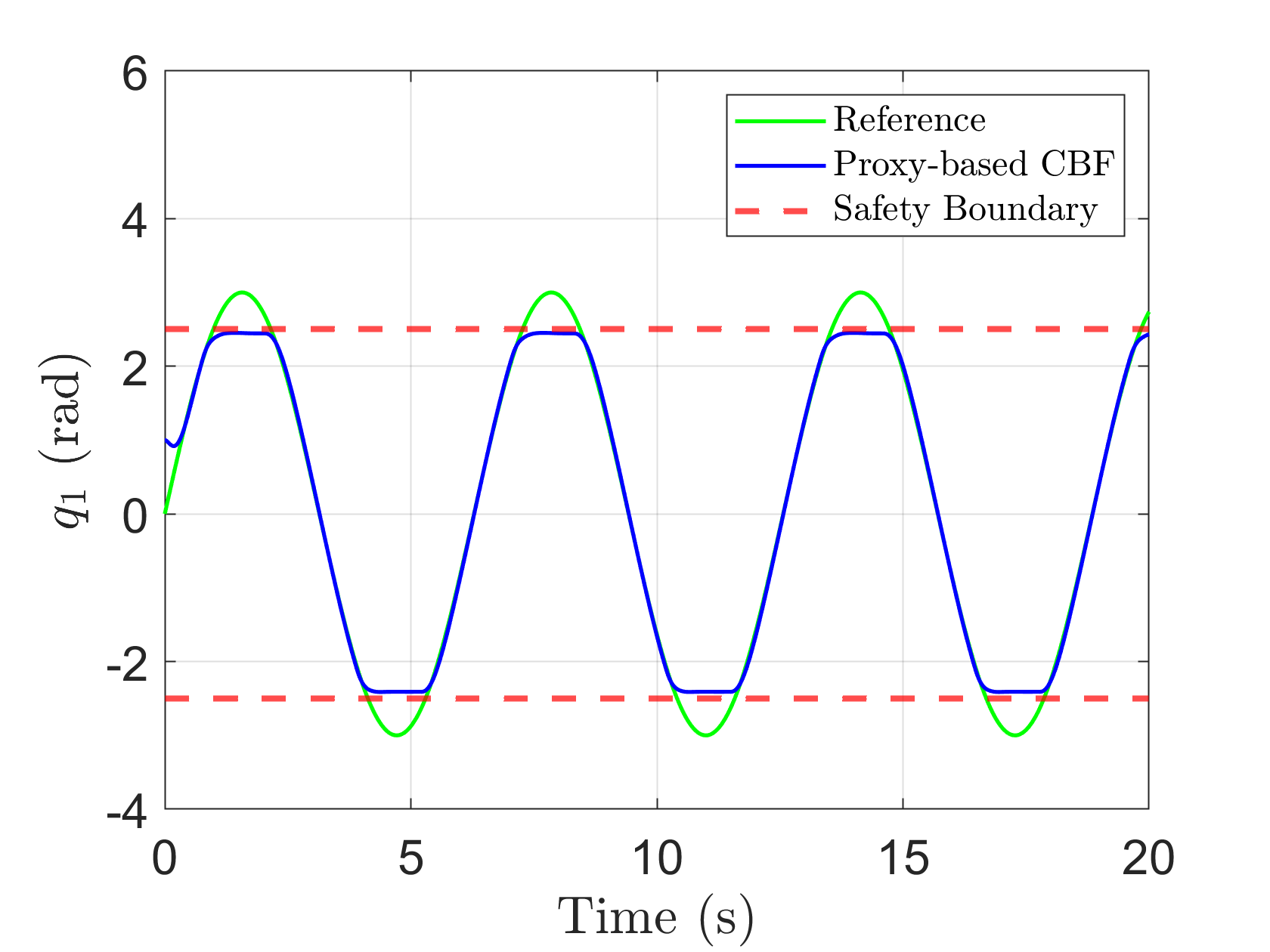}
\includegraphics[width=0.825\linewidth]{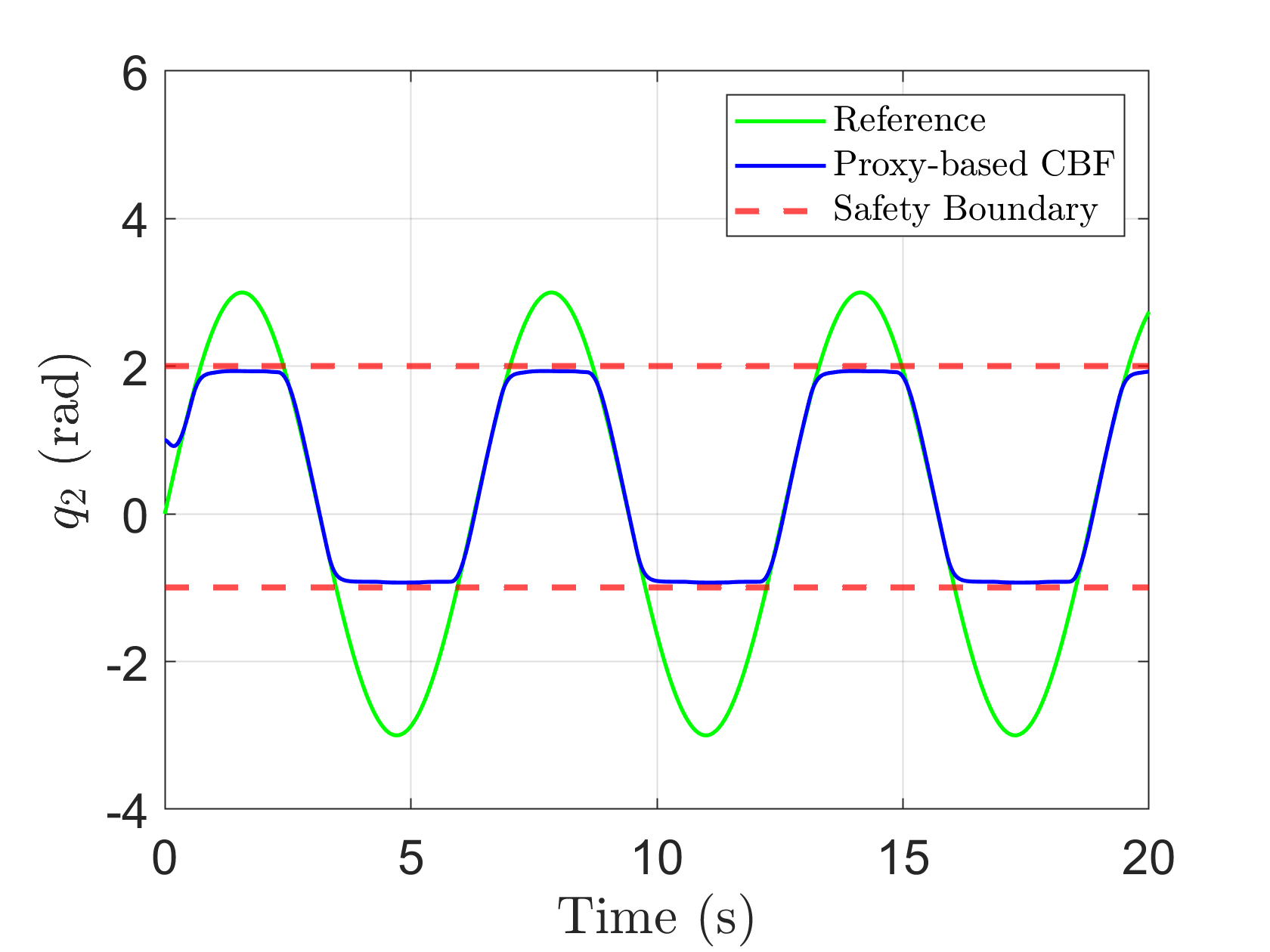}
\includegraphics[width=0.825\linewidth]{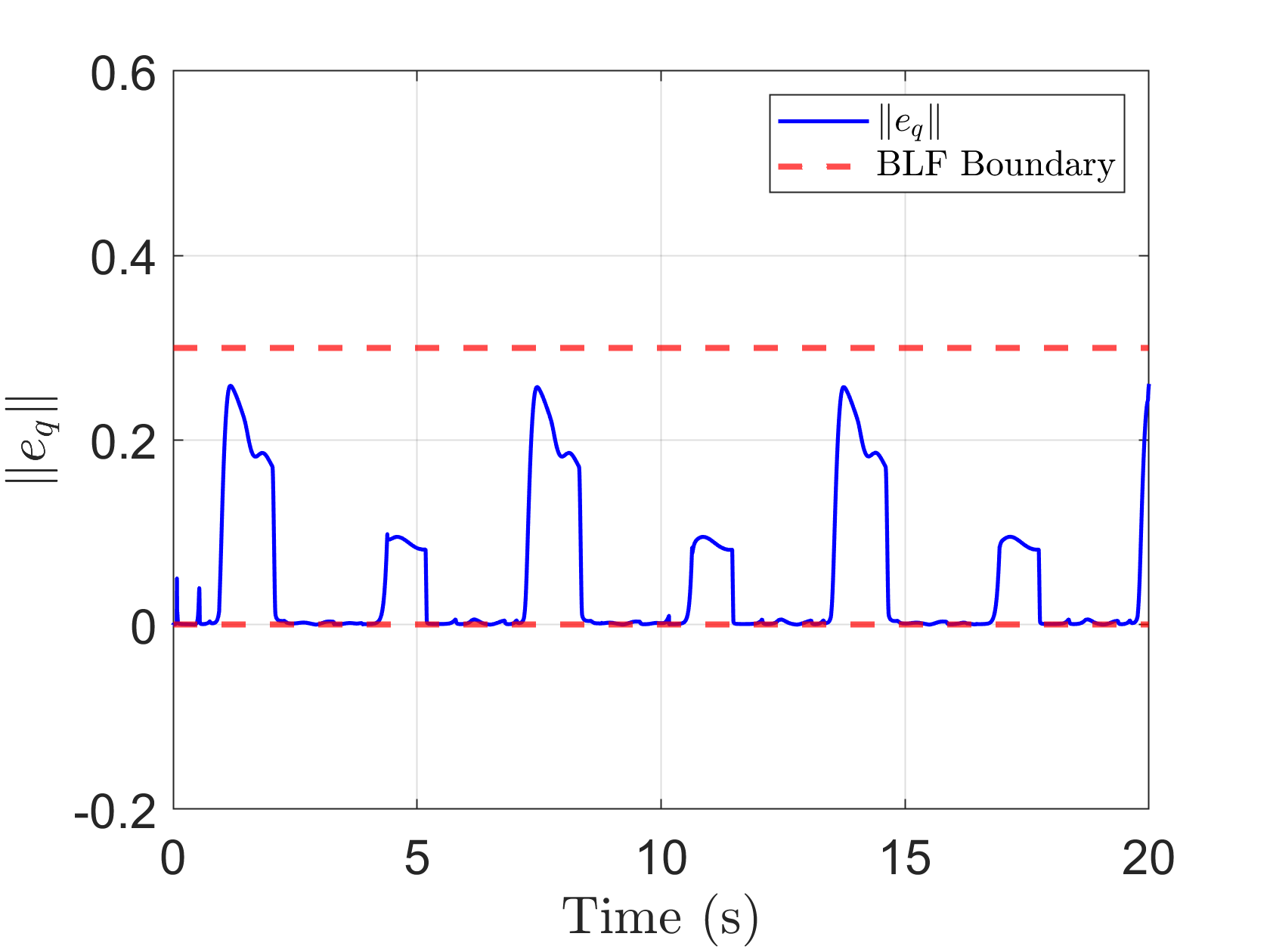}
\includegraphics[width=0.825\linewidth]{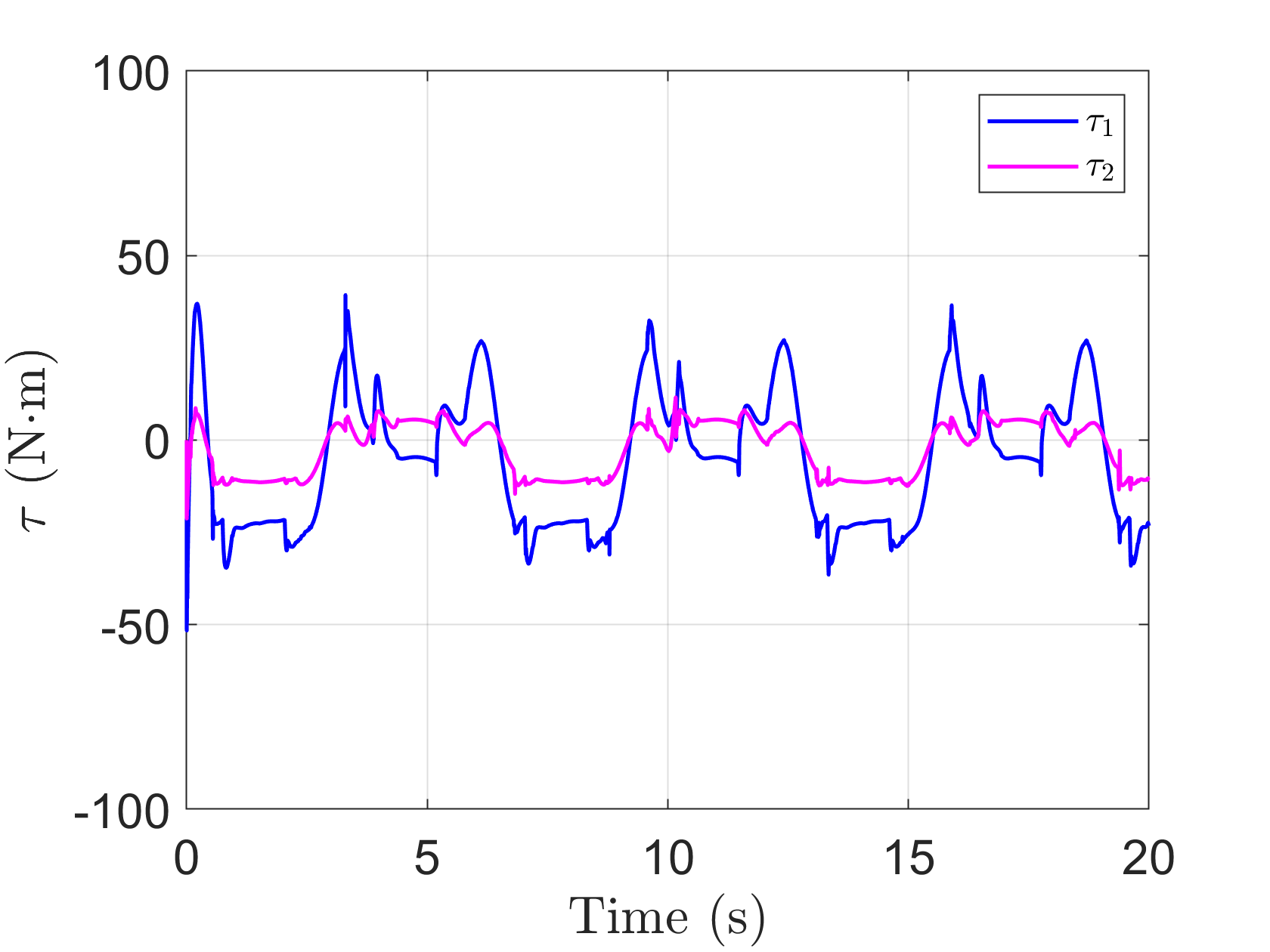}
\caption{Simulation results of the joint space safe control. From (a) and (b) it can be seen that the proposed controller can ensure safety of $\C_q$ as the trajectories of $q_1$ and $q_2$ never cross the boundary of the safe region represented by the dash red lines, with good tracking performance inside the safe region. Moreover, from (c) one can conclude that the adaptive BLF-based controller developed in Theorem \ref{theorem:blf} is effective since the constraint on $e_q$ is not violated.  }\label{fig:joint}
\end{figure}

\subsection{Task Space Safe Control}
\label{sec:simt}
In this subsection, simulation results for task space safe control are presented. The forward kinematics can be expressed as
\begin{equation*}
    \begin{bmatrix}
    x\\y
    \end{bmatrix}=\begin{bmatrix}
    l_1\cos(q_1)+l_2\cos(q_1+q_2)\\
    l_1\sin(q_1)+l_2\sin(q_1+q_2)
    \end{bmatrix},
\end{equation*}
and the Jacobian is
\begin{equation*}
    J(q)=\begin{bmatrix}
    -l_1\sin(q_1)-l_2\sin(q_1+q_2)&-l_2\sin(q_1+q_2)\\
    l_1\cos(q_1)+l_2\cos(q_1+q_2)&l_2\cos(q_1+q_2)
    \end{bmatrix}.
\end{equation*}
Note that the measurement uncertainties and disturbance are the same as those in Section \ref{sec:simj} such that Assumption \ref{assumptiontaud} and \ref{assumptionnoise} are satisfied.
To demonstrate the effectiveness of the proposed method, three cases are considered.
\begin{itemize}
    \item Case 1: The CBF is $h=x^2+y^2-0.25$; the initial conditions are  $x(0)=1.59$, $y(0)=0.11$; the reference trajectories are $x_d(t)=1.5-0.3t$, $y(t)=0$; and the control parameters are chosen as $\ep=\ep_1=\ep_2=0.01$, $L=0.05$, $\ga_\ta=1$, $\beta=2$, $\la=100$, $l_1=l_2=20$, $\ga=1000$, and $k_1=3$.
\item Case 2: The CBF is $h=1+x-y^2$; the initial conditions are  $x(0)=1.8$, $y(0)=0$; the reference trajectories are $x_d(t)=1.5\cos(t)$, $y(t)=1.5\sin(t)$; and the control parameters are the same as those in Case 1 except for $\ga=300$.
  \item Case 3: The CBF is $h=1+x+y$; the initial conditions are  $x(0)=1.8$, $y(0)=0$; the reference trajectories are $x_d(t)=1.5\cos(t)$, $y(t)=1.5\sin(t)$; and the control parameters are the same as those in Case 1 except for $l_1=l_2=40$, $\la=100$, and $\ga=500$.
\end{itemize}
The simulation results are presented in Fig. \ref{fig:task}, from which one can see that in all three cases the safety of $\C_p$ is ensured by the proposed controller as the trajectories of $x$ and $y$ always stay inside the safe region whose boundary is represented by the dash red lines, and the tracking performance inside the safe region is satisfactory.

\begin{figure}[!t]
\centering
\includegraphics[width=\linewidth]{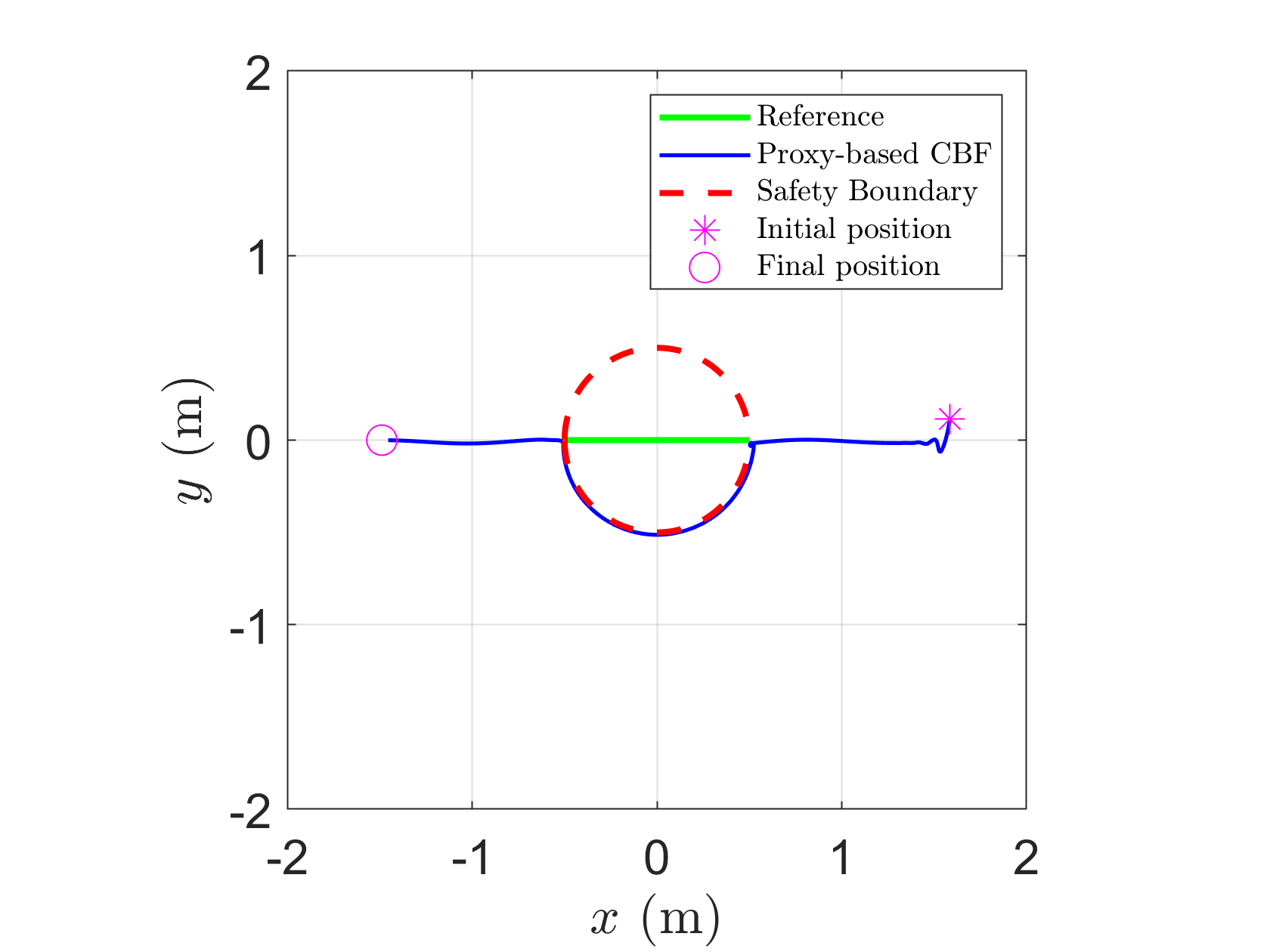}\vskip 2mm
\includegraphics[width=\linewidth]{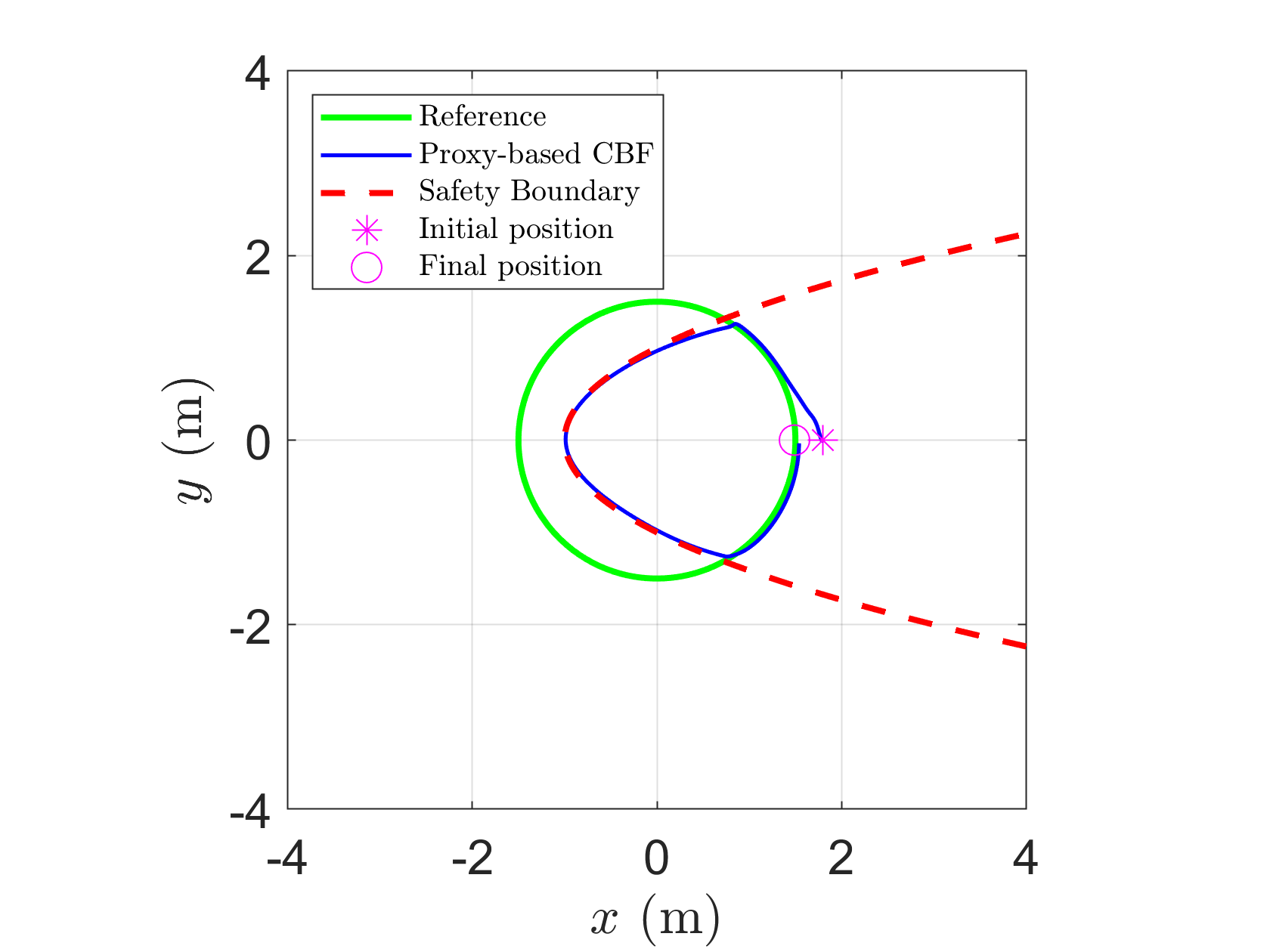}\vskip 2mm
\includegraphics[width=\linewidth]{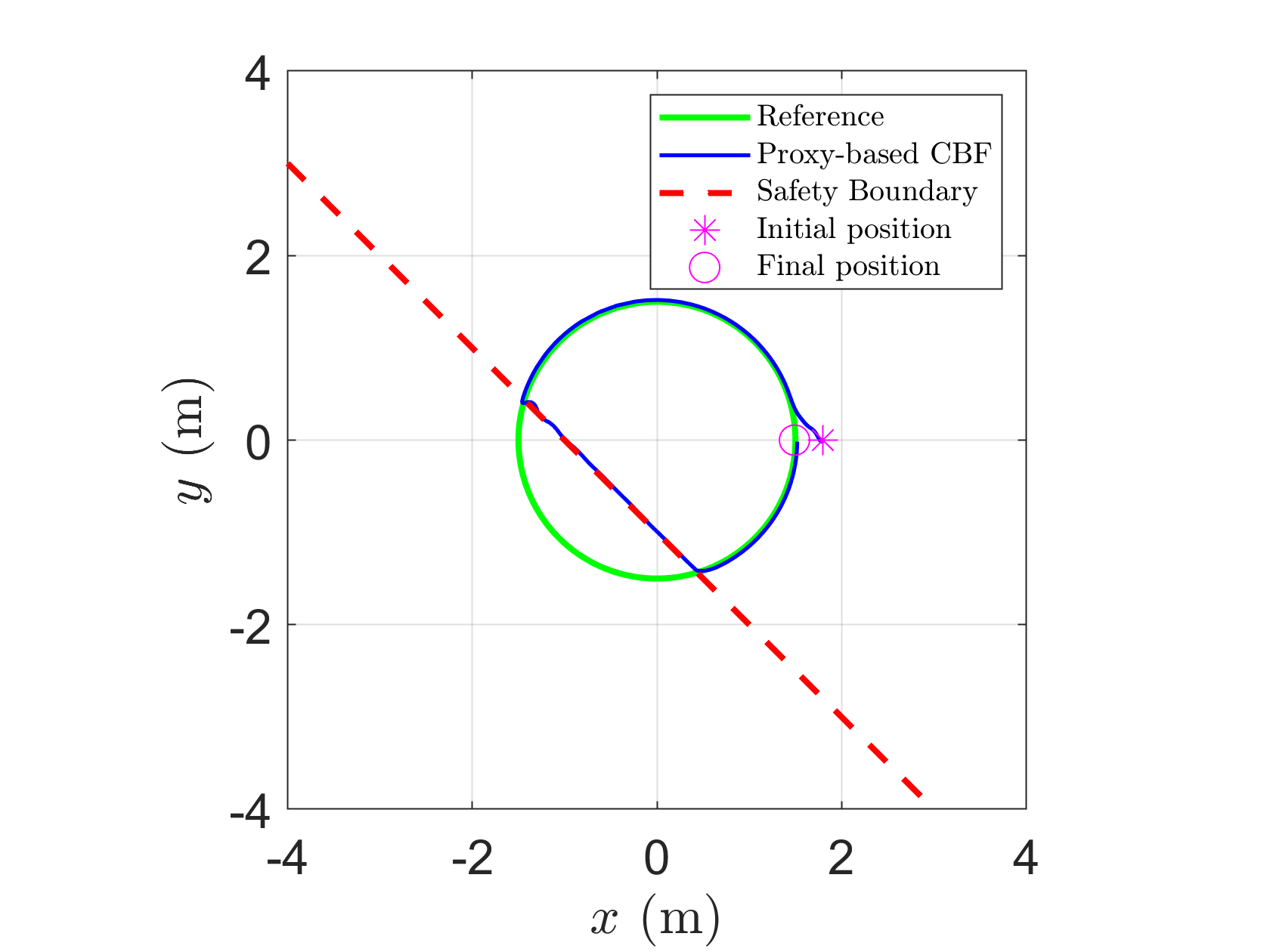}\vskip 2mm
\caption{Simulation results of the task space safe control. One can observe that proposed controller can ensure safety of $\C_p$ as the trajectories of $x$ and $y$ always stay inside the safe region whose boundary is represented by the dash red lines, and the tracking performance inside the safe region is satisfactory. }\label{fig:task}
\end{figure}

\section{Conclusion}
\label{sec:conclusion}
In this paper, a novel proxy CBF-BLF-based control design approach is proposed for EL systems with limited information by decomposing an EL system into the proxy subsystem and the virtual tracking subsystem. A BLF-based controller is designed for the virtual tracking subsystem to ensure the boundedness of the safe velocity tracking error. Based on that, a CBF-based controller is designed for the proxy subsystem to ensure safety in the joint space or task space. Simulation results are given to verify the effectiveness of the proposed method. Future work includes conducting experimental studies and generalizing the results to ensure safety and stability simultaneously for EL systems.


\bibliographystyle{IEEEtran}
\bibliography{CDC2023}
\end{document}